\newtheorem{theorem}{Theorem}[section]
\newtheorem{lemma}[theorem]{Lemma}
\newtheorem{proposition}[theorem]{Proposition}
\newtheorem{corollary}[theorem]{Corollary}
\newcommand{\node}{n}  
\newcommand{\tree}{T}   
\newcommand{\Nodes}{N}   
\newcommand{\brel}{R}   
\newcommand{\tlabel}{L}  
\newcommand{\pathh}[3]{#1 \stackrel{#2}{\longrightarrow} #3}  
\newcommand{\natn}{k}  
\newcommand{\Modals}{M} 
\newcommand{\Props}{P} 
\newcommand{\trail}{\alpha}  
\newcommand{\Trails}{Tl}  
\newcommand{\emptytrail}{\epsilon}
\newcommand{\dual}[1]{\overline{#1}}
\newcommand{\eqdef}{\stackrel{\text{\tiny{def}}}{=}}
\newcommand{\syntaxdef}{\mathrel{::=}}
\newcommand{\syntaxtable}[1]{
  \def\entry##1[##2]##3[##4]{
    {##1} & \syntaxdef& \hspace{3cm} & \!\!\!\! \mbox{##2}
    \\    &     & {##3} & \mbox{##4} }
  \def\singleentry##1[]##2[##3]{
  {##1} & \syntaxdef& {##2} & \!\!\!\! \mbox{##3} }
  \def\oris##1[##2]{
    \\    & |   & {##1} & \mbox{##2} }
  \def\orisopt##1[##2]{
    \\ \left(& |   & {##1} & \mbox{##2} \right) }
  \begin{array}{rcll}
  #1
  \end{array}
  }
\newcommand{\smallsyntax}[1]{\[\syntaxtable{#1}\]}
\newcommand{\modals}{m}  
\newcommand{\Formulas}{\Phi}  
\newcommand{\true}{\top}  
\newcommand{\var}{x}  
\newcommand{\prop}{p} 
\newcommand{\form}{\phi}  
\newcommand{\restrictedform}{\psi}  
\newcommand{\modalf}[2]{\langle #1\rangle #2} 
\newcommand{\ufixpf}[2]{\mu #1.#2}  
\newcommand{\countf}[5]{\langle {#2} \rangle^{#1}_{#4#5}#3}
\newcommand{\valuation}{V}  
\newcommand{\semf}[3]{[\![#1]\!]^{#2}_{#3}}  
\newcommand{\semfc}[3]{\semf{#1}{#2_c}{#3}}
\newcommand{\flcl}[1]{{FL}({#1})}   
\newcommand{\lean}[1]{{Lean}({#1})}  
\newcommand{\fnode}[1]{n^{#1}}   
\newcommand{\fNodes}[1]{N^{#1}}  
\newcommand{\fbrel}[4]{R^{#1}(#2,#3)= #4}  
\newcommand{\stree}{\Gamma}   
\newcommand{\stroot}[1]{root(#1)}   
\newcommand{\stnodes}[1]{nodes(#1)}  
\newcommand{\nav}[2]{nav((#1),#2)}  
\newcommand{\fishrel}[2]{ R^{fl}(#1,#2)}
\newcommand{\pathvar}{\rho} 
\newcommand{\qvar}{q} 
\newcommand{\axvar}{a} 
\newcommand{\ch}{\text{child}}
\newcommand{\self}{\text{self}}
\newcommand{\pnt}{\text{parent}}
\newcommand{\desc}{\text{descendant}}
\newcommand{\descsf}{\text{desc$\!-\!$or$\!-\!$self}}
\newcommand{\anc}{\text{ancestor}}
\newcommand{\ancsf}{\text{anc$\!-\!$or$\!-\!$self}}
\newcommand{\fsib}{\text{foll$\!-\!$sibling}}
\newcommand{\psib}{\text{prec$\!-\!$sibling}}
\newcommand{\sibs}{\text{siblings}}
\newcommand{\foll}{\text{following}}
\newcommand{\prdn}{\text{preceding}}
\newcommand{\syntaxdefinition}{::=}
\newcommand{\XPath}{\text{XPath}}
\newcommand{\Axis}{\text{Axis}}
\newcommand{\PathExpr}{\text{PathExpr}}
\newcommand{\Step}{\text{Step}}
\newcommand{\NameTest}{\text{NameTest}}
\newcommand{\Qualifier}{\text{Qualifier}}
\newcommand{\QName}{\text{QName}}
\newcommand{\pathexcept}{~\text{except}~}
\newcommand{\pathunion}{~\text{union}~}
\newcommand{\pathintersect}{~\text{intersect}~}
\newcommand{\qualifnot}{\text{not}~}
\newcommand{\qualifand}{~\text{and}~}
\newcommand{\qualifor}{~\text{or}~}
\newcommand{\CountExpr}{\text{CountExpr}}
\newcommand{\Comparison}{\text{Comp}}
\newcommand{\qualifposition}{\text{position}()}
\newcommand{\qualifcount}[1]{\text{count}(#1)}
\newcommand{\precedence}{\ll}
\newcommand{\dom}{N}
\newcommand{\pathsem}[1]{\llbracket #1 \rrbracket}
\newcommand{\qualifsem}[1]{\llbracket #1 \rrbracket_{\text{Qualif}}}
\newcommand{\card}[1]{|#1|}
\newcommand{\fc}{\medtriangledown}
\newcommand{\ns}{\medtriangleright}
\newcommand{\invfc}{\medtriangleup}
\newcommand{\invns}{\medtriangleleft}
\newcommand{\mucalcEFunc}{E^\rightarrow}
\newcommand{\mucalcPFunc}{P^\rightarrow}
\newcommand{\mucalcAFunc}{A^\rightarrow}
\newcommand{\mucalcE}[2]{\mucalcEFunc\llbracket{#1}\rrbracket_{#2}} 
\newcommand{\mucalcP}[2]{\mucalcPFunc\llbracket{#1}\rrbracket_{#2}}
\newcommand{\mucalcA}[2]{\mucalcAFunc\llbracket{#1}\rrbracket_{#2}}
\newcommand{\mucalcPexFunc}{P^\leftarrow}
\newcommand{\mucalcQexFunc}{Q^\leftarrow}
\newcommand{\mucalcAexFunc}{A^\leftarrow}
\newcommand{\mucalcPex}[2]{\mucalcPexFunc\llbracket{#1}\rrbracket_{#2}}
\newcommand{\mucalcQex}[2]{\mucalcQexFunc\llbracket{#1}\rrbracket_{#2}}
\newcommand{\mucalcAex}[2]{\mucalcAexFunc\llbracket{#1}\rrbracket_{#2}}
\newcommand{\nominal}{@n}
\newcommand{\et}{\wedge}
\newcommand{\ou}{\vee}
\newcommand{\someFCverifies}{\left<\fc\right>}
\newcommand{\someNSverifies}{\left<\ns\right>}
\newcommand{\someinvFCverifies}{\left<\invfc\right>}
\newcommand{\someinvNSverifies}{\left<\invns\right>}
\newcommand{\step}[2]{\text{{#1}::}{#2}}
\newcommand{\axis}[1]{\text{#1}}
\newcommand{\axisvar}{\emph{a}}
\newcommand{\qualif}[2]{{#1}\text{[$#2$]}}
\newcommand{\op}[1]{\mathbin{\text{\small{#1}}}}
\newcommand{\nodelabel}{\sigma}
\newcommand{\startatom}{\circledS}
\newcommand{\extracttrail}[1]{\text{trail}(#1)}
\newcommand{\target}[1]{\text{tail}(#1)}
\newcommand{\head}[1]{\text{head}(#1)}
\newcommand{\subst}[2]{^{#1} \!/\! _{#2}}
\newcommand{\psitree}[1]{$#1$tree}
\newcommand{\pathn}{\rho}
\newcommand{\nmax}{\mathtt{nmax}}
\newcommand{\imax}{\mathtt{max}}
\newcommand{\slf}[1]{\text{sf}(#1)}
\newcommand{\uc}{\text{ch}}
\newcommand{\induced}{\stackrel{.}{\in}}
\begin{document}
\RRNo{7251}

\RRtitle
{A Tree Logic with Graded Paths and Nominals}

\RRetitle
{A Tree Logic with Graded Paths and Nominals}

\RRauthor{Everardo B\'arcenas 
\and Pierre Genev\`es 
\and Nabil Laya\"ida 
\and Alan Schmitt 
}

\RRresume{Ce document introduit une logique d'arbre d\'ecidable en temps exponentielle et qui est capable d'exprimer des contraintes de cardinalit\'e sur chemins multidirectionnelle}
\RRkeyword{Modal Logic, XML, XPath, Schema}
\RRmotcle{Logique Modal, XML, XPath, Schema}
\RRprojets{WAM et SARDES}
\RRthemeProj{wam} 
\RRdate{April 2010}

\URRhoneAlpes

\RRabstract{
Regular tree grammars and regular path expressions constitute core constructs widely used in programming languages and type systems. 
Nevertheless, there has been little research so far on reasoning frameworks for path expressions where node cardinality constraints occur along a path in a tree. 
 We present a logic capable of expressing deep counting along paths which may include arbitrary recursive forward and backward navigation.
The counting extensions can be seen as a generalization of graded modalities that count immediate successor nodes. While the combination of graded modalities, nominals, and inverse modalities yields undecidable logics over graphs, we show that these features can be combined in a tree logic decidable in exponential time. 
}

\makeRR

\section{Introduction}

%
%


A fundamental peculiarity of XML is the description of regular properties. For example, in XML schema languages the content types of element definitions is made through the use of regular expressions. In addition, selecting nodes in such constrained trees is also done by the mean of regular path expressions (\`a la XPath). In both cases, it is often interesting to be able to express conditions on the frequency of occurrences of nodes. 

Even if we consider simple strings, it is well known that some formal languages easily described in English may require voluminous regular expressions. For instance, as pointed in \cite{klarlund-tacas95}, the language $L_{2a2b}$ of all strings over $\Sigma=\{a,b,c\}$ containing at least two occurrences of $a$ \emph{and} at least two occurrences of $b$ seems to require a large expression, such as:
\begin{align*}
&& \Sigma^*a\Sigma^*a\Sigma^*b\Sigma^*b\Sigma^* &&\cup&& \Sigma^*a\Sigma^*b\Sigma^*a\Sigma^*b\Sigma^* \\
&\cup& \Sigma^*a\Sigma^*b\Sigma^*b\Sigma^*a\Sigma^* 
&&\cup&& \Sigma^*b\Sigma^*b\Sigma^*a\Sigma^*a\Sigma^* \\
&\cup& \Sigma^*b\Sigma^*a\Sigma^*b\Sigma^*a\Sigma^* 
&&\cup&& \Sigma^*b\Sigma^*a\Sigma^*a\Sigma^*b\Sigma^*.
\end{align*}
If we added $\cap$ to the operators for forming regular expressions, then the language $\{a,b,c\}$ could be expressed more concisely as $(\Sigma^*a\Sigma^*a\Sigma^*) \cap (\Sigma^*b\Sigma^*b\Sigma^*)$. In logical terms, conjunction offers a first dramatic reduction in expression size.


If we now consider a formalism equipped with the ability of describing numerical constraints on the frequency of occurrences, we get a second (exponential) reduction in size. For instance, the above expression can be formulated as
 $(\Sigma^*a\Sigma^*)^2 \cap (\Sigma^*b\Sigma^*)^2$.
We can even write  $(\Sigma^*a\Sigma^*)^{2^{20}} \cap (\Sigma^*b\Sigma^*)^{2^{20}}$ instead of a (much) larger expression.


Different extensions of regular expressions with intersection, counting constraints, and interleaving have been recently considered over strings, and for describing content models of sibling nodes in XML type languages \cite{ghelli-icdt09,Gelade-siam08,Kilpelainen-ic07}.  The complexity of the inclusion problem over these different language extensions and their combinations typically ranges from polynomial to exponential space (see \cite{Gelade-siam08} for a survey). The main distinction between these works and the  work presented here is that we focus on counting nodes located along deep and recursive paths in trees.

When considering regular \emph{tree} languages instead of regular \emph{string} languages,  succinct syntactic sugars such as the ones presented above are even more useful, as branching makes the situation more combinatorial compared to strings. 
In the case of trees, it is often useful to express cardinality constraints not only on the sequence of children nodes, but also in a particular region of a tree: in a subtree for example. Suppose for instance that we want to define a tree language over $\Sigma$ where there is no more than 2 ``b'' nodes. This seems to require a quite large regular tree type expression such as the one below:
\newcommand{\Xstart}{x_\text{root}}
\newcommand{\Xtwobmax}{x_{b\leq2}}
\newcommand{\Xnob}{x_{\neg b}} 
\newcommand{\Xonebmax}{x_{b\leq1}}
\newcommand{\lb}{\texttt{[}}
\newcommand{\rb}{\texttt{]}}
\newcommand{\xtag}[2]{#1 \lb #2 \rb}
\newcommand{\Xsimple}{x}
$$\begin{array}{lcl}
\Xstart  \!\!\! &\rightarrow&\!\!\ \xtag{b}{\Xonebmax} \mid \xtag{c}{\Xtwobmax} \mid \xtag{a}{\Xtwobmax} \vspace{0.1cm} \\
\Xtwobmax\!\!\! &\rightarrow&\!\!\! \Xnob, \xtag{b}{\Xnob},\Xnob,\xtag{b}{\Xnob},\Xnob \mid \Xnob, \xtag{b}{\Xonebmax},\Xnob  \vspace{0.1cm}\\
&& \mid \Xnob, \xtag{a}{\Xtwobmax}, \Xnob \mid \Xnob, \xtag{c}{\Xtwobmax}, \Xnob  \mid \Xonebmax \vspace{0.1cm}\\
\Xonebmax \!\!\! &\rightarrow&\!\!\ \Xnob \mid  \Xnob, \xtag{b}{\Xnob}, \Xnob \mid \xtag{a}{\Xonebmax} \mid \xtag{c}{\Xonebmax}  \vspace{0.1cm}\\
\Xnob \!\!\! &\rightarrow&\!\!\ (\xtag{a}{\Xnob} \mid \xtag{c}{\Xnob})^*
\end{array}$$
where $\Xstart$ is the starting non-terminal; $\Xnob, \Xonebmax, \Xtwobmax$ are non-terminals; and the bracket notation $\xtag{a}{\Xnob}$ describes a subtree whose root is labeled $a$ and in which there is no $b$ node.

More generally, the widely adopted notations for regular tree grammars produce very verbose definitions for properties involving cardinality constraints on the nesting of elements\footnote{This is typically the reason why the standard DTD for XHTML does not syntactically prevent the nesting of anchors, whereas this nesting is actually prohibited in the XHTML standard.}.

The problem with regular tree (and even string) grammars is that one is forced to fully expand all the patterns of interest using concatenation, union, and Kleene star. Instead, it is often tempting to rely on another kind of (formal) notation that just describes a simple pattern and additional constraints on it. For instance, one could imagine denoting the previous example as follows, where the additional constraint is described using XPath notation:
$$\left(\Xsimple  \!\rightarrow\!\! (\xtag{a}{\Xsimple} \mid \xtag{b}{\Xsimple} \mid \xtag{c}{\Xsimple})^*\right) \vspace{0.1cm}  ~\et~ \text{count(/descendant-or-self::}b) \leq 2$$

Although this kind of counting operators does not increase the expressive power of the regular tree grammars, they can have a drastic impact on succinctness, thus making reasoning over these languages harder (as noticed in \cite{DBLP:conf/mfcs/Gelade08} in the case of strings).
%
Indeed, reasoning on this kind of extensions without relying on their expansion (in order to avoid syntactic blow-ups) is often tricky \cite{DBLP:conf/mfcs/GeladeGM09}. Determining satisfiability, containment, and equivalence over these classes of extended regular expressions typically require involved algorithms with extra-complexity \cite{DBLP:conf/focs/MeyerS72} compared to plain vanilla regular expressions.

In the present paper, we propose a logical notation that happens to be especially appropriate for describing many sorts of cardinality constraints on the frequency of occurrence of nodes in 
regular tree types. Regular tree types encompass most of XML  types  (DTDs, XML Schemas, RelaxNGs) used in practice today.

XPath is the standard query language for XML documents, and it is an important part of other XML technologies such as XSLT and XQuery.
XPath expressions are regular path expressions interpreted as sets of nodes selected from a given context node. 
In contrast with regular tree types, which only express properties on children nodes, most of the expressive power of XPath comes from the ability to perform multidirectional navigation,
that is, XPath expressions are able to express properties involving not only recursive navigation, as for descendant nodes for instance, but also backward navigation, as for ancestor nodes. 
Unfortunately, expressing cardinality restrictions on nodes accessible by recursive multidirectional paths may introduce an extra-exponential cost \cite{DBLP:conf/doceng/GenevesR05,Balder09}, 
or may even lead to undecidable formalisms \cite{Balder09,DBLP:conf/cade/DemriL06}.
We propose in this paper a decidable framework capable of succinctly express cardinality constraints along deep multidirectional paths.

\paragraph{Contribution and Outline}
We introduce a tree logic with counting operators for expressing arbitrarily deep and recursive counting constraints in Section~\ref{sec:logic}. A sound and complete algorithm for testing satisfiability of logical formulas in exponential time is presented in Section~\ref{sec:algo}. Section~\ref{sec:application} shows how the logic and the algorithm can be applied in the XML setting and in particular for the static analysis of XPath expressions and common schemas containing constraints on the frequency of occurrence of nodes. Finally, we review related works in Section~\ref{sec:relatedwork} before concluding in Section~\ref{sec:conclusion}.

\section{Counting Tree Logic} \label{sec:logic}
We first present trees that we consider, and define a notion of trails in trees, before introducing the syntax and semantics of logical formulas.

\subsection{Trees}\label{subsec:trees}
%
We consider finite trees which are node-labeled and sibling-ordered. Since there is a well-known bijective encoding between $n\!-\!$ary and binary trees, we focus on binary trees without loss of generality.
Specifically, we use the encoding represented in Figure \ref{fig:depthlevels}, 
where the binary representation preserves the first child of a node and append sibling nodes as second successors.
\begin{figure}
\begin{center}
\includegraphics[keepaspectratio,width=7cm]{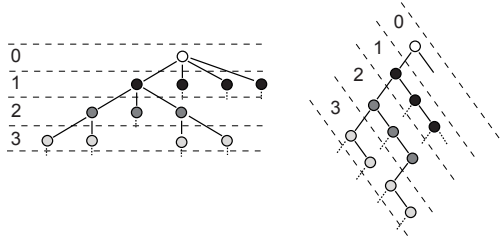}
\end{center}
\caption{$n\!-\!$ary to binary trees}
\label{fig:depthlevels}
\end{figure}

We consider the modalities ``$\fc$'' and  ``$\ns$''.  The modality  ``$\fc$'' labels the edge between a node and its first child. The modality  ``$\ns$'' labels the edge between a node and its next sibling.
We also consider the converse modalities  ``$\invfc$'' and  ``$\invns$'' that respectively labels the same edges in the reverse direction. 

In order to define a simple set theoretic semantics for the logic, we consider trees in a way similar to Kripke structures for modal logics \cite{685998}. Specifically, we name  $\Modals=\{\fc,\ns,\invfc,\invns\}$ the set of {\em modalities}. For $m\in \Modals$ we denote by $\dual{m}$ the corresponding inverse modality ($\dual{\fc}=\invfc, \dual{\ns}=\invns, \dual{\invfc}=\fc, \dual{\invns}=\ns$). We consider a countable alphabet $\Props$ of {\em propositions} representing names of nodes. A node is labeled with exactly one proposition.

A tree can then be seen as a tuple $(\Nodes,\brel,\tlabel)$, where:
$\Nodes$ is a finite set of nodes; $\brel$ is a partial mapping from $\Nodes\times\Modals$ to $\Nodes$ that restricts the labeling of edges to form a tree structure; and $\tlabel$ is a labeling function from $\Nodes$ to $\Props$.

\subsection{Trails}\label{subsec:trails}

Trails are defined as regular expressions formed by modalities, as follows: 
\begin{align*}
\trail_0 &::= \modals \mid \trail_0,\trail_0 \mid \trail_0\shortmid\trail_0\\
\trail &::= \trail_0 \mid \trail_0^\star \trail_0
\end{align*}
We restrict trails to sequences or repeated subtrails (which contain no repetition) followed by a subtrail (with no repetition). We also disallow trails of the form $\modals, \dual\modals$, which may result in formulas with cycles.

The syntactic interpretation of trails corresponds to sets of sequences of modalities (as in the usual semantics of regular expressions).

In a given tree, we say that there is a {\em trail $\trail$ from the node $\node_0$ to the node $\node_\natn$}, written $\pathh{n_0}{\trail}{n_\natn}$,
if and only if there is a sequence of nodes $\node_0, \ldots, n_k$ and a sequence of modalities $\modals_1, \ldots, \modals_k$ that belongs to the syntactic interpretation of the trail $\trail$, such that $\brel(\node_j,\modals_{j+1})=\node_{j+1}$, where $j=0,\ldots,\natn-1$.  
We say that a path $\pathn$ among two nodes belongs to a trail $\trail$, written $\pathn \in \trail$, if there exists a sequence of modalities between the nodes that belongs to the interpretation of the trail.


\subsection{Syntax of Logical Formulas}\label{subsec:syntax}

The syntax of logical formulas is given in Figure \ref{normalformsyn}, where $m \in \Modals$ and $k \in \mathbb{N}$. The syntax is shown in negation normal form, which can be reached usual De Morgan rules together with rules given in Figure~\ref{normalneg}. The fact that the semantic interpretation is preserved even though the smallest fixpoint does not become a greatest fixpoint is a consequence of Lemma~\ref{lem:finite_unfolding}.
\begin{figure}
  
  \smallsyntax{
  \Formulas \ni \entry       \form    [formula]
    \top  \quad | \quad  \neg\top           [true, false]
  \oris \prop ~\quad | \quad \neg \prop             [atomic prop (negated)]
  \oris        x [recursion variable]
  \oris        \phi \ou \phi [disjunction]
  \oris        \phi \et \phi [conjunction]
  \oris  \modalf{\modals}\phi  \quad | \quad  \neg \modalf{\modals}\true [modality (negated)]

  \oris \countf{}{\trail}{\restrictedform}{\leq}{\natn}  \quad | \quad \countf{}{\trail}{\restrictedform}{>}{\natn} [counting]

  \oris        \ufixpf{\var}{\restrictedform} [fixpoint operator] \\
  \entry       \restrictedform    []
    \top  \quad | \quad  \neg\top           []
  \oris \prop ~\quad | \quad \neg \prop             []
  \oris        x []
  \oris        \restrictedform \ou \restrictedform []
  \oris        \restrictedform \et \restrictedform []
  \oris  \modalf{\modals}\restrictedform  \quad | \quad  \neg \modalf{\modals}\true []
  \oris        \ufixpf{\var}{\restrictedform} []
  } 
\caption{Syntax of Formulas (in Normal Form).} \label{normalformsyn}
\end{figure}

\begin{figure}
\begin{align*}
\neg \modalf{\modals}{\form}  &\equiv  \neg \modalf{\modals}{\true} \vee \modalf{\modals}{\neg \form} 
& \neg \ufixpf{\var}{\restrictedform}  &\equiv \ufixpf{\var}{\neg \restrictedform\{\subst{\var}{\neg \var}\}}\\
 \neg \countf{}{\trail}{\restrictedform}{\leq}{\natn} &\equiv \countf{}{\trail}{\restrictedform}{>}{\natn} &  \neg \countf{}{\trail}{\restrictedform}{>}{\natn} &\equiv \countf{}{\trail}{\restrictedform}{\leq}{\natn}
\end{align*}
\caption{Reduction to Negation Normal Form.}\label{normalneg}
\end{figure}


Defining an {\em equality} operator for counting formulas is straightforward.
\begin{align*}
\countf{}{\trail}{\restrictedform}{=}{\natn} &\equiv \countf{}{\trail}{\restrictedform}{>}{(\natn-1)} \wedge \countf{}{\trail}{\restrictedform}{\leq}{\natn} &&\text{if $k>0$}\\
\countf{}{\trail}{\restrictedform}{=}{0} &\equiv  \countf{}{\trail}{\restrictedform}{\leq}{0}
\end{align*}


\subsection{Semantics of Logical Formulas}\label{subsec:semantics}
Formulas are interpreted as sets of nodes in a tree. A model of a formula is a tree, such that the formula denotes
a non-empty set of nodes in this tree. A counting formula $\countf{}{\trail}{\restrictedform}{>}{\natn}$ is interpreted as follows:
the set of nodes such that there are at least $\natn + 1$  nodes satisfying $\restrictedform $ through the trail $\trail$.
For example, the formula $\prop_1\wedge \modalf{\fc}{\countf{}{\ns^*}{\prop_2}{>}{5}}$, denotes $\prop_1$ nodes
with strictly more than $5$ children nodes named $\prop_2$.

In order to present the formal semantics of formulas, we introduce valuations. 
Given a tree, a {\em valuation } $\valuation$ is a binary relation between tree nodes and variables.
We write $\valuation[\subst{\Nodes^\prime}{x}]$, where $\Nodes^\prime$ is a subset of the nodes, for the relation denoted by $V$ extended with $(n,x)$ for every  $\node\in\Nodes^\prime$.
%
%
Given a tree $\tree=(\Nodes,\brel, \tlabel)$ and a valuation $\valuation$, the formal semantics of formulas is given in Figure~\ref{formsem}.
\begin{figure}
\begin{align*}
& \semf{\true}{\tree}{\valuation} &&=&& \Nodes \\
& \semf{\neg \true}{\tree}{\valuation} &&=&& \emptyset \\
& \semf{\prop}{\tree}{\valuation} &&=&& \{\node , \tlabel(\node)=\prop\} \\
& \semf{\neg \prop}{\tree}{\valuation} &&=&& \{\node , \tlabel(\node)\neq \prop\} \\
& \semf{\var}{\tree}{\valuation} &&=&& \{\node , (\node,\var)\in \valuation\}\\
& \semf{\form_1\vee \form_2}{\tree}{\valuation} &&=&& \semf{\form_1}{\tree}{\valuation} \cup \semf{\form_2}{\tree}{\valuation}\\
& \semf{\form_1\wedge \form_2}{\tree}{\valuation} &&=&& \semf{\form_1}{\tree}{\valuation} \cap \semf{\form_2}{\tree}{\valuation}\\
& \semf{\modalf{\modals}{\form}}{\tree}{\valuation} &&=&& \{\node , \brel(\node,\modals) \in \semf{\form}{\tree}{\valuation}\}\\
& \semf{\neg \modalf{\modals}{\true}}{\tree}{\valuation} &&=&& \{\node , \brel(\node,\modals) \text{ undefined}\} \\
& \semf{\countf{}{\alpha}{\restrictedform}{\leq}{\natn}}{\tree}{\valuation} &&=&&
    \{\node , |\{\node^\prime\in \semf{\restrictedform}{\tree}{\valuation} \mid \pathh{\node}{\alpha}{\node^\prime}\}|\leq  \natn \}\\
& \semf{\countf{}{\alpha}{\restrictedform}{>}{\natn}}{\tree}{\valuation} &&=&&
    \{\node ,  |\{\node^\prime\in \semf{\restrictedform}{\tree}{\valuation} \mid \pathh{\node}{\alpha}{\node^\prime}\}|>  \natn \}\\
& \semf{\ufixpf{\var}{\restrictedform}}{\tree}{\valuation} &&=&& \bigcap\{\Nodes^\prime , \semf{\restrictedform}{\tree}{\valuation[\subst{\Nodes^\prime}{x}]}\subseteq \Nodes^\prime\}
\end{align*}
\caption{Semantics of Formulas.} \label{formsem}
\end{figure}

Intuitively, the formulas are interpreted as sets of nodes in a tree: propositions denote the nodes where they occur;
negation is interpreted as set complement; disjunction and conjunction are respectively set union and intersection;
the least fixpoint operator performs finite recursive navigation; 
and the counting operator denotes certain nodes, named the source nodes, such that the nodes, accessible from a single source through a trail, fulfill a cardinality restriction.  A formula is said to be {\em satisfiable} when its interpretation is not empty. 


\subsection{Restriction over Formulas} \label{cyclefreeness}
We consider a syntactic restriction over formulas similar to the one in \cite{geneves-pldi07}: every formula of the logic must be \emph{cycle-free} (so that the logic is closed under negation \cite{geneves-pldi07}). Intuitively, in a cycle-free formula, fixpoint variables do not occur in the scope of both a modality and its converse.  For example, cycle-free trails are trails where both a subtrail and its converse do not occur under the scope of the recursion operator.  We do not consider counting formulas under fixpoints nor under counting formulas.

\begin{lemma}\label{lem:finite_unfolding}
  Let $\form$ be a cycle-free formula, and $\tree$ be a tree for which $\semf{\form}{\tree}{\emptyset} \neq \emptyset$. Then there is a finite unfolding $\form^\prime$ of the fixpoints of $\form$ such that $\semf{\form^{\prime}\{\subst{\neg\true}{\ufixpf{\var}{\restrictedform}}\}}{\tree}{\emptyset} = \semf{\form}{\tree}{\emptyset}$.
\end{lemma}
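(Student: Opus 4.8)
The plan is to show that on a fixed finite tree $\tree$, each least-fixpoint subformula of $\form$ reaches its value after a bounded number of unfoldings, and that this bound depends only on $\tree$ (indeed on $|\Nodes|$), so that a single finite unfolding works uniformly for all the fixpoints of $\form$. First I would recall that for a monotone operator on the powerset lattice of the finite set $\Nodes$, the least fixpoint is reached by iterating from $\emptyset$ at most $|\Nodes|$ times: if $F_{\restrictedform,x}(\Nodes') = \semf{\restrictedform}{\tree}{\valuation[\subst{\Nodes'}{x}]}$, then $\semf{\ufixpf{x}{\restrictedform}}{\tree}{\valuation} = F_{\restrictedform,x}^{|\Nodes|}(\emptyset)$, and moreover $F_{\restrictedform,x}^{j}(\emptyset) = F_{\restrictedform,x}^{|\Nodes|}(\emptyset)$ for all $j \geq |\Nodes|$. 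The $j$-th iterate is exactly the semantics of the formula obtained by unfolding $\ufixpf{x}{\restrictedform}$ $j$ times and then substituting $\neg\true$ for the innermost remaining copy of the fixpoint (this is the standard approximant characterization, and it is where monotonicity of the semantics in $V$ — guaranteed because the syntax has no negation on variables — is used).

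Next I would handle the fact that $\form$ may contain several, possibly nested, fixpoint subformulas. Proceeding by induction on the fixpoint nesting structure (innermost first), I would define $\form'$ to be the result of unfolding every fixpoint subformula of $\form$ exactly $|\Nodes|$ times. Concretely, replace each $\ufixpf{x}{\restrictedform}$ by $\restrictedform\{\subst{\restrictedform\{\subst{\cdots}{x}\}}{x}\}$ nested $|\Nodes|$ deep, carrying the unfolding recursively into the body; the substitution $\{\subst{\neg\true}{\ufixpf{x}{\restrictedform}}\}$ in the statement then kills whatever fixpoint copies remain at the bottom. Using the approximant fact above together with the compositional (and monotone) nature of the semantics in Figure~\ref{formsem}, an induction on the structure of $\form$ shows $\semf{\form'\{\subst{\neg\true}{\ufixpf{x}{\restrictedform}}\}}{\tree}{\emptyset} = \semf{\form}{\tree}{\emptyset}$: each subformula's semantics is unchanged because every fixpoint it contains has been unfolded past its stabilization point $|\Nodes|$, and the semantics of the enclosing constructs ($\vee$, $\wedge$, $\modalf{\modals}{\cdot}$, counting, and outer fixpoints) is determined by the semantics of the immediate subformulas.

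The point to be careful about — and the main obstacle — is the interaction with the counting operators and with the cycle-freeness restriction, so that the finite unfolding is actually well-defined and the rewriting stays inside a sensible fragment. Since the paper stipulates that counting formulas do not occur under fixpoints nor under other counting formulas, a counting subformula $\countf{}{\trail}{\restrictedform}{\bowtie}{\natn}$ sits at a ``top'' layer and its restricted body $\restrictedform$ is itself only subject to the same unfolding argument; one must check that unfolding the fixpoints inside $\restrictedform$ does not disturb the trail $\trail$ (which ranges over modality sequences, not over $\Nodes$) and that the counting semantics, being a cardinality test on $\semf{\restrictedform}{\tree}{\valuation}$, is preserved once $\semf{\restrictedform}{\tree}{\valuation}$ is. Cycle-freeness is what guarantees that each fixpoint has a genuinely monotone (in fact continuous, on the finite lattice this is automatic) body with no variable appearing negatively, so the approximant $F^{j}(\emptyset)$ characterization is valid; I would invoke the corresponding closure-under-negation fact from \cite{geneves-pldi07} rather than reprove it. The remaining bookkeeping — that substitution commutes with the semantics in the expected way, and that nested unfoldings can be performed in any order (innermost-out) without changing the result — is routine and I would only sketch it.
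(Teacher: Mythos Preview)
Your argument is correct and more self-contained than the paper's. The paper does not reprove anything: it simply observes that counting subformulas can be replaced by equivalent non-counting formulas (at an exponential blow-up in size), after which the statement reduces verbatim to the finite-unfolding lemma already established in \cite{geneves-pldi07}. Your route instead argues directly that on the fixed finite tree $\tree$ every monotone operator on $2^{\Nodes}$ stabilises after $|\Nodes|$ iterations, so unfolding each fixpoint $|\Nodes|$ times and plugging $\neg\true$ at the bottom reproduces the semantics; compositionality then propagates this through conjunctions, disjunctions, modalities, and the counting layer (which, as you note, sits above all fixpoints by the syntactic restriction). This is a perfectly good elementary proof and avoids both the blow-up and the external citation.

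One point of confusion worth cleaning up: you attribute monotonicity of the fixpoint bodies to cycle-freeness. That is not its role here. Monotonicity in the bound variable is already forced by the grammar of Figure~\ref{normalformsyn}, which admits $x$ but not $\neg x$; you yourself say this earlier in the proposal. Cycle-freeness in this paper is a restriction on how recursion variables interact with a modality and its converse, and your finite-lattice argument does not use it at all --- the hypothesis is present in the lemma only because the ambient logic is restricted to cycle-free formulas. Likewise, the ``closure-under-negation fact from \cite{geneves-pldi07}'' you propose to invoke is orthogonal to what you need; you can drop that reference.
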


\begin{proof}
  As counting formulas may be replaced by non-counting formulas (with the cost of an exponential blow up), the proof is identical to the one in \cite{geneves-pldi07}.
\end{proof}

\subsection{Global Counting Formulas and Nominals}\label{subsec:globalformula}

An interesting consequence of the inclusion of backward axes in trails is the ability to reach every node in the tree from a given node of the tree, using the trail $(\invfc|\invns)^\star,(\fc|\ns)^\star$\footnote{Note that this trail is cycle-free.}. We can thus select some nodes depending on some global counting property. Consider the following formula, where $\#$ stands for one of the comparison operators $\leq,>,=$.
\[\countf{}{(\invfc|\invns)^\star,(\fc|\ns)^\star}{\form_1}{\#}{\natn}\]
Intuitively, this formula considers each node $n$ of the tree, and counts how many nodes in the whole tree satisfy $\form_1$. It then selects node $n$ if and only if the count is compatible with the comparison considered. This formula thus returns either every node of the tree, or the empty set.
It is then easy to restrict the selected nodes to some that satisfy a given formula $\form_2$, using intersection.
\[\countf{}{(\invfc|\invns)^\star,(\fc|\ns)^\star}{\form_1}{\#}{\natn} \wedge \form_2\]
This formula select every node satisfying $\form_2$ if and only if there are $\#\natn$ nodes satisfying $\form_1$, which we write as follows.
\[ \form_1\#\natn \implies \form_2 \]
We can now express existential properties, such as ``select all nodes satisfying $\form_2$ if there exists a node satisfying $\form_1$''.
\[ \form_1 > 0 \implies \form_2 \]
We can also express universal properties, such as ``select all nodes satisfying $\form_2$ if every node satisfies $\form_1$''.
\[ (\neg\form_1) \leq 0 \implies \form_2 \]

Another way to interpret global counting formulas is as a generalization of the so-called nominals in the modal logics community \cite{DBLP:conf/cade/SattlerV01}. Nominals are special propositions whose interpretation is a singleton (they occur exactly once in the model). They come for free with the logic. A nominal, denoted ``$\nominal$'' in the remaining part of the paper, corresponds to the following global counting formula:
\[\countf{}{(\invfc|\invns)^\star,(\fc|\ns)^\star}{n}{=}{1}\] where $n$ is a new fresh atomic proposition.

Notice that we can also perform a navigation to everywhere in a tree with only fixpoint formulas,
hence a nominal can be alternatively written as:
\begin{align*}
\nominal \equiv n \wedge\neg [& \desc(n) \vee \anc(n) \vee \\
                            &  \descsf(\sibs(n)) \vee  \\
							& \descsf(\sibs(\anc(n))) ], 
\end{align*}
where:
\begin{align*}
&\desc(\form)&&=&& \modalf{\fc}{\ufixpf{\var}{\form \vee \modalf{\fc}{\var} \vee \modalf{\ns}{\var}}}\\
& \fsib(\form) &&=&& \ufixpf{\var}{\modalf{\ns}{\form} \vee \modalf{\ns}{\var}}\\
& \psib(\form) &&=&& \ufixpf{\var}{\modalf{\invns}{\form} \vee \modalf{\invns}{\var}}\\
& \descsf(\form) &&=&& \ufixpf{\var_0}{\form \vee \modalf{\fc}{\ufixpf{\var_1}{\var_0 \vee \modalf{\ns}{\var_1}} }}\\
& \anc(\form) &&=&& \ufixpf{\var}{\modalf{\invfc}{(\form \vee \var)} \vee \modalf{\invns}{\var}}\\
& \sibs(\form) &&=&& \fsib(\form) \vee \psib(\form)
\end{align*}

\subsection{Graded Paths}\label{sec:gradedpaths}


Graded modalities have been introduced to count immediate successor nodes in graphs \cite{DBLP:conf/cade/KupfermanSV02}. Specifically, graded modalities make it possible to restrict the number of occurrences of immediate successors of a node in a graph by the mean of an explicit constant upper-bound and/or lower-bound. Here we consider trees and extend the ``immediate successor'' notion to nodes reachable from any regular path, including reverse and recursive navigation. 

A peculiarity of graded modalities in graphs is that they can be used inside recursive formulas. A similar notion in trees consists in counting immediate children nodes, as performed by the counting formula $\modalf{\fc}{\countf{}{\ns^*}{\form}{\#}{k}}$, where $\phi$ describes the property to be counted. It is then possible to consider occurrences of this counting formula inside a fixpoint operator. This is because this peculiar counting formula can be simply rewritten in terms of plain vanilla logical formulas. For instance, the formula $\modalf{\fc}{\countf{}{\ns^*}{\prop}{>}{1}}$ states the existence of at least two ``$p$'' children, and is translated into:
\[\modalf{\fc}{\ufixpf{x}{(\prop \wedge \modalf{\fc}{\ufixpf{y}{\prop \vee \modalf{\ns}{y} }}) \vee \modalf{\ns}{x} }} 
\]
The general nesting scheme of this translation can be expressed as follows, where the function $\uc(\cdot)$ takes such a counting formula as input and returns its translation:
\begin{align*}
 \uc(\modalf{\fc}{\countf{}{\ns^*}{\form}{>}{0}}) =& \modalf{\fc}{\ufixpf{\var}{\form \vee \modalf{\ns}{\var}}}\\
 \uc(\modalf{\fc}{\countf{}{\ns^*}{\form}{>}{k+1}}) =& \modalf{\fc}{\ufixpf{\var}{ \left(\form \wedge\uc(\modalf{\fc}{\countf{}{\ns^*}{\form}{>}{k}})\right) \vee \modalf{\ns}{\var}}} \\
 \uc(\modalf{\fc}{\countf{}{\ns^*}{\form}{\leq}{k}})  =& \neg \uc(\modalf{\fc}{\countf{}{\ns^*}{\form}{>}{k}}) 
\end{align*}
We can even apply a recursive version of this transformation in order to rewrite nested counting formulas. 

In Lemma \ref{countinfixp}, we show the computational cost of the translation does not depend on the size of the formula, but on the nesting level of counting subformulas.

The possibility of using an arbitrary fixpoint operator around a given formula allows one to express the ``until'' operator, proposed for XPath by Marx \cite{marx-tods05}. 
Owing to the previous translation, we can combine counting features with the ``until'' operator and express properties that go beyond the expressive power of the XPath 1.0 standard. For instance, the following formula states that ``starting from the current node, until we reach an ancestor named $a$, every ancestor has at least 3 children named $b$'':
$$\ufixpf{x}{\left(\modalf{\fc}{\countf{}{\ns^*}{b}{>}{2}} \wedge \ufixpf{y}{\modalf{\invfc}{x}\vee \modalf{\invns}{y}}\right)  \vee a}$$

\section{Satisfiability Algorithm} \label{sec:algo}
We present a tableau-based algorithm for checking satisfiability of formulas. Given a formula, the algorithm seeks to build a satisfying tree. A satisfying tree is found if and only if the formula is satisfiable, otherwise the algorithm concludes that the formula is unsatisfiable.

\subsection{Overview}
The algorithm operates in two stages. 

First, a formula $\form$ is decomposed into a set of subformulas, called the \emph{Lean}. The Lean gathers all subformulas that are useful for determining the truth status of the initial formula, while eliminating redundancies. For instance, conjunctions and disjunctions are eliminated at this stage, since, if a subformula $\phi_1$ holds then one does not need to know the truth status of $\phi_2$ in order to determine the truth status of $\phi_1 \vee \phi_2$. In fact, the lean (defined in \ref{sec:lean}) only gathers atomic propositions and modal subformulas. The Lean defines a finite number of formulas that can be composed. The set of all these compositions represents the exhaustive search universe in which the algorithm is looking for a satisfying tree. A tree node corresponds to a valuation of the Lean formulas.

The second stage of the algorithm consists in a least fixpoint computation that builds every relevant binary tree in a bottom-up manner. At the first step of this stage, all possible leaves are considered. At each further step, the algorithm considers every possible parent node that can be connected with a node of the previous steps.
At each step, built subtrees are checked for consistency: for instance if a formula at a node $n$ involve a forward modality  $\modalf{\fc}{\form'}$, then $\form'$ must be verified at the first child of $n$. Reciprocally, due to converse modalities, a given node may impose restrictions on its possible parent nodes.
The algorithm only considers consistent nodes at each step, meaning that the whole subtree of a given node added at a given step provably satisfies a subformula, except its potential top-level backward modalities that will be taken into account at the next step. At each step, counting formulas are verified. 
Finally, the algorithm terminates whenever:
\begin{itemize}
  \item  either a tree that satisfies the initial formula has been found, and its root does not contain any pending (unproven) backward modality; or
  \item no more parent nodes can be considered (the exploration of the whole search universe is complete): the formula is unsatisfiable.
\end{itemize}

The algorithm is proven sound and complete: $\form$ is satisfiable if and only if a tree in which $\form$ is satisfied at some node is built. Thus either such a tree is built, or $\form$ is not satisfiable. 


\subsection{Preliminaries}
We first annotate every counting formula with a fresh \emph{counting proposition} $c$, written $\countf{c}{\trail}{\form}{\#}{\natn}$.
We first formally define the notions of Lean and nodes. To this end, we first need to extract navigating formulas from counting formulas.
\begin{align*}
 nav(\var) &= \var \\
 nav(\prop) &= \prop \\
 nav(\top) &= \top \\
 nav(c) &= c\\
 nav(\neg \prop) &= \neg \prop \\
 nav(\neg \modalf{\modals}{\top})&= \neg \modalf{\modals}{\top}\\
 nav(\form_1\wedge \form_2)&= nav(\form_1) \wedge nav(\form_2) \\
 nav(\form_1\vee \form_2)&= nav(\form_1) \vee nav(\form_2) \\
 nav(\modalf{\modals}{\form}) &= \modalf{\modals}{nav(\form)} \\
 nav(\ufixpf{\var}{\restrictedform}) &= \ufixpf{\var}{nav(\restrictedform)}\\
 nav(\countf{c}{\trail}{\restrictedform}{>}{\natn})&= \nav{\trail}{\restrictedform\wedge c}\\
 nav(\countf{c}{\trail}{\restrictedform}{\leq}{\natn})&= \nav{\trail}{(\restrictedform \land c) \lor (\neg\restrictedform \land \neg c)}\\
\nav{\epsilon}{\restrictedform} &= \restrictedform \\
 \nav{\modals}{\restrictedform} &= \modalf{\modals}{\restrictedform}\\
 \nav{\trail_1,\trail_2}{\restrictedform}&= \nav{\trail_1}{\nav{\trail_2}{\restrictedform}} \\
 \nav{\trail_1\mid\trail_2}{\restrictedform} &= \nav{\trail_1}{\restrictedform} \vee \nav{\trail_2}{\restrictedform}\\
 \nav{\trail^\star}{\restrictedform} &= \ufixpf{\var}{nav(\restrictedform)\vee\nav{\trail}{\var}}
\end{align*}

We define the {\em Fisher-Ladner} relation among formulas as follow, where $i=1,2$.
\begin{align*}
& \fishrel{\form_1\wedge\form_2}{\form_i}, && \fishrel{\form_1\vee\form_2}{\form_i}, \\
& \fishrel{\ufixpf{\var}{\form}}{\form[\subst{\ufixpf{\var}{\form}}{\var}]}, && \fishrel{\countf{c}{\trail}{\restrictedform}{\#}{\natn}}{nav(\countf{c}{\trail}{\restrictedform}{\#}{\natn})},\\
&  \fishrel{\modalf{\modals}\phi}{\phi}.
\end{align*}

The {\em Fisher-Ladner} closure of a formula $\form$, written $\flcl{\form}$, is the set defined as follow.
\begin{align*}
& \flcl{\form}_0 &&=&& \{\form\}, \\
& \flcl{\form}_{i+1} &&=&& \flcl{\form}_i \cup \{\form^\prime \mid \fishrel{\form^{\prime\prime}}{\form^\prime}, \form^{\prime\prime}\in \flcl{\form}_i\},\\
& \flcl{\form}&&=&&\flcl{\form}_k,
\end{align*}
\text{where $k$ is the smallest integer s.t. $\flcl{\form}_k=\flcl{\form}_{k+1}$}.
\noindent
Note that this set is finite: fixpoints are only expanded once.

\label{sec:lean}
The {\em Lean set of a formula} $\form$ includes navigating formulas of the form $\modalf{\modals}{\top}$, every navigating formulas of the form $\modalf{\modals}{\form^\prime}$ from the Fisher-Ladner closure, every proposition occurring in $\form$, written $\Props_{\form}$, every counting proposition, written $C$, and an extra proposition that does not occur in $\form$ used to represent other names, written $p_{\overline\form}$.

\begin{equation*}
\lean{\form}=\{\modalf{\modals}{\top}\}\cup \{\modalf{\modals}{\form^\prime}\in \flcl{\form}\} \cup \Props_{\form} \cup C \cup \{p_{\overline\form}\}
\end{equation*}

A {\em $\form\!-\!$node }, written $\fnode{\form}$, is a non-empty subset of $\lean{\form}$, such that:
\begin{itemize}
\item exactly one proposition from $\Props_{\form} \cup \{p_{\overline\form}\}$ is in each $\form\!-\!$node;
\item when $\modalf{\modals}{\form^\prime}\in\fnode{\form}$, then $\modalf{\modals}{\true}\in\fnode{\form}$; and
\item both $\modalf{\invfc}{\true}$ and $\modalf{\invns}{\true}$ cannot be in the same $\form\!-\!$node.
\end{itemize}
The set of $\form\!-\!$nodes is defined as $\fNodes{\form}$.

Intuitively, the formula corresponding to a node $\fnode{\form}$ is the following.
\[
\fnode{\form} = \bigwedge_{\psi \in \fnode{\form}} \psi \wedge \bigwedge_{\psi \in \lean{\form} \setminus \fnode{\form}} \neg \psi
\]

When the formula $\form$ under consideration is fixed, we often omit the superscript.

A \emph{\psitree{\form}} is either the empty tree $\emptyset$, or a triple $(\fnode{\form}, \stree_1,\stree_2)$ where $\stree_1$ and $\stree_2$ are \psitree{\form}s. 

\begin{figure}
  \begin{mathpar}
    \inferrule*{ }{\node \vdash^{\form} \top} \and
    \inferrule*{\restrictedform\in\node}{\node\vdash^{\form} \restrictedform} \and
    \inferrule*{\restrictedform \not\in \node}{\node\vdash^{\form} \neg \restrictedform} \and
    \inferrule*{\node\vdash^{\form} \restrictedform_1 \\ \node\vdash^{\form} \restrictedform_2}{\node\vdash^{\form} \restrictedform_1\wedge\restrictedform_2} \and
    \inferrule*{\node\vdash^{\form} \restrictedform_1}{\node\vdash^{\form} \restrictedform_1\vee\restrictedform_2} \and
    \inferrule*{\node\vdash^{\form} \restrictedform_2}{\node\vdash^{\form} \restrictedform_1\vee\restrictedform_2} \and
    \inferrule*{\node\vdash^{\form} \restrictedform\{\subst{\ufixpf{\var}{\restrictedform}}{\var}\}}{\node\vdash^{\form} \ufixpf{\var}{\restrictedform}}
  \end{mathpar}
  \caption{Local entailment relation: between nodes and formulas}
  \label{fig:entailmentnode}
\end{figure}

We now turn to the definition of consistency of a \psitree{\form}. First, we define an entailment relation between a node and a formula in Figure~\ref{fig:entailmentnode}.

We can now define the consistency relation between nodes of a \psitree{\form}.

  Two nodes $\node_1$ and $\node_2$ are consistent under modality $\modals \in \{\fc,\ns\}$, written $\fbrel{\form}{\node_1}{\modals}{\node_2}$, iff
  \begin{align*}
    \forall \modalf{\modals}\psi \in \lean{\form}&,\modalf{\modals}\psi \in \node_1 \iff \node_2 \vdash^{\form} \psi\\
    \forall \modalf{\dual\modals}\psi \in \lean{\form}&, \modalf{\dual\modals}\psi \in \node_2 \iff \node_1 \vdash^{\form} \psi
  \end{align*}

Consistency is checked each time a node is added to the tree, ensuring that forward modalities of the node are indeed satisfied by the nodes below, and that pending backward modalities of the node below are consistent with the added node. Note that do not check counting formulas at this point, as they are globally verified in the next step.


Upon generation of a finished tree, i.e., a tree with no pending backward modality, one may check whether a node of this tree satisfies $\form$. To this end, we first define forward navigation in a \psitree{\form} $\stree$. Given a path consisting of forward modalities $\pathn$, $\stree(\pathn)$ is the node at that path. It is undefined if there is no such node.
\begin{align*}
  (\node,\stree_1,\stree_2)(\epsilon) &= \node \\
  (\node,\stree_1,\stree_2)(\fc\pathn) &= \stree_1(\pathn) \\
  (\node,\stree_1,\stree_2)(\ns\pathn) &= \stree_2(\pathn)
\end{align*}
We also allow extending the path with backward modalities if they match the last modality of the path.
\begin{align*}
  (\node,\stree_1,\stree_2)(\pathn \fc \invfc) &= (\node,\stree_1,\stree_2)(\pathn) \\
  (\node,\stree_1,\stree_2)(\pathn \ns \invns) &= (\node,\stree_1,\stree_2)(\pathn)
\end{align*}

Now, we are able to define an entailment relation along paths in
\psitree{\form}s in Figure~\ref{fig:countingentailment}. This relation extends
local entailment relation (Figure~\ref{fig:entailmentnode}) with checks for
counting formulas. Note that the case for fixpoints is contained in the case for formulas with no counting subformula. Note also that $\neg \restrictedform$ in the ``less than'' case denotes the negation normal form.

\begin{figure}
  \begin{mathpar}
   \inferrule*{\form^\prime \text{ does not contain counting formulas} \\ \stree(\pathn) \vdash^\form \form^\prime}{\pathn \vdash^{\form}_{\stree} \form^\prime}\and
    \inferrule*{\pathn\vdash^{\form}_{\stree}  \form_1 \\ \pathn\vdash^{\form}_{\stree}  \form_2}{\pathn\vdash^{\form}_{\stree}  \form_1\wedge\form_2} \and
    \inferrule*{\pathn\vdash^{\form}_{\stree}  \form_1}{\pathn\vdash^{\form}_{\stree}  \form_1\vee\form_2} \and
    \inferrule*{\pathn\vdash^{\form}_{\stree}  \form_2}{\pathn\vdash^{\form}_{\stree}  \form_1\vee\form_2} \and
    \inferrule*{\pathn \modals \vdash^{\form}_{\stree}\form^\prime}{\pathn \vdash^{\form}_{\stree} \modalf{\modals}\form^\prime}\and
    \inferrule*{
     | \{ \node^\prime,\; \pathn^\prime \in \trail \wedge \stree(\pathn \pathn^\prime) = \node^\prime \wedge \node^\prime \vdash^{\form} \restrictedform \land c \} | > \natn
}{ \pathn \vdash^{\form}_{\stree} \countf{c}{\trail}{\restrictedform}{>}{\natn}}
\and
    \inferrule*{
     | \{ \node^\prime,\; \pathn^\prime \in \trail \wedge \stree(\pathn \pathn^\prime) = \node^\prime \wedge \node^\prime \vdash^{\form} \restrictedform \land c \} | \leq \natn \\
     \forall \pathn^\prime \in \trail, 
     \stree(\pathn\pathn^\prime) \vdash^\form (\restrictedform \land c) \lor (\neg \restrictedform \land \neg c)
}{ \pathn \vdash^{\form}_{\stree} \countf{c}{\trail}{\restrictedform}{\leq}{\natn}}   \end{mathpar}
  \caption{Global entailment relation (incl. counting formulas)}
  \label{fig:countingentailment}
\end{figure}

We conclude these preliminaries by introducing some final notations.
The {\em root} of a \psitree{\form} is defined as follows. 
\begin{align*}
\stroot{\emptyset} & = \emptyset\\
\stroot{(\node,\stree_1,\stree_2)} & = \node
\end{align*}
We extend this notion to multiset of trees and write $\stroot{ST}$ for
the multiset of roots of the trees of $ST$.

The multiset of nodes of a tree is defined as follows.
\begin{align*}
\stnodes{\emptyset} &= \emptyset\\
\stnodes{(\node,\stree_1,\stree_2)} &= \{\node\} \cup \stnodes{\stree_1} \cup \stnodes{\stree_2}
\end{align*}
We also extend this notion to multiset of trees.

A \psitree{\form} $\stree$ {\em satisfies} a formula $\form$, written $\stree\vdash \form$, if neither $\modalf{\invfc}\top$ nor $\modalf{\invns}\top$ occur in $\stroot{\stree}$, and if there is a path $\pathn$ such that $\stree(\pathn) = \node$ and $\node\vdash^{\form}_{\stree,\pathn} \form$.

A multiset of trees $ST$ {\em satisfies} a formula $\form$, written $ST\vdash \form$, when there is a syntactic tree $\stree\in ST$ such that
$\stree\vdash \form$.

\subsection{The Algorithm}
We are now ready to present the algorithm, which is parameterized by $K(\form)$, the maximum number of occurrences of a given node in a path from the root of the tree to a leaf. It builds consistent candidate trees from the bottom up, and checks at each step if one of the built tree satisfies the formula, returning $1$ if it is the case. As the set of nodes from which to build the trees is finite, it eventually stops and returns $0$ if no satisfying tree has been found.

\begin{algorithm}
\caption{Check Satisfiability of $\form$} 
\label{satalgo}
\begin{algorithmic}
    \STATE $ST \leftarrow \emptyset$
    \REPEAT
        \STATE $AUX \leftarrow \{(\node,\stree_1,\stree_2) \mid$ \COMMENT{we extend the trees}\\
        \quad $\nmax(\node,\stree_1,\stree_2) \leq K(\form) + 2$ \COMMENT{with an available node}\\
        \quad for $i$ in $\fc,\ns$ \COMMENT{and each child is either}\\
        \quad $\stree_i = \emptyset$ and $\modalf{i}{\top} \notin \node$ \COMMENT{an empty tree}\\
        \quad or $\stree_i\in ST$ \COMMENT{or a previously built tree}\\
        \quad\quad $\modalf{\dual{i}}\top \in \stroot{\stree_i}$ \COMMENT{with pending backward modalities}\\
        \quad\quad $\fbrel{\form}{\node}{i}{\stroot{\stree_i}}\}$ \COMMENT{checking consistency}
        \IF{$AUX \subseteq ST$}
      \RETURN{$0$} \COMMENT{No new tree was built}
        \ENDIF
      \STATE $ST \leftarrow ST \cup AUX$
    \UNTIL{$ST \vdash \form$}
    \RETURN{$1$}
\end{algorithmic}
\end{algorithm}

We now define the auxiliary $\nmax$ function as follows, where $\imax$ is the usual maximum function between integers.
\begin{align*}
  \nmax(\node,\stree_1,\stree_2) &= \imax(\nmax(\node,\stree_1),\nmax(\node,\stree_2))\\
  \nmax(\node,(\node,\stree_1,\stree_2)) & = 1 + \nmax(\stree_1,\stree_2)\\
  \nmax(\node,(\node^\prime,\stree_1,\stree_2)) & = \nmax(\stree_1,\stree_2) \quad\text{if $\node \neq \node^\prime$}\\
  \nmax(\node, \emptyset) &= 0
\end{align*}

Note a formula $\ufixpf{\var}{\form}$ can be rewritten in an equivalent formula such that $\var$ in $\form$ 
is only present in formulas with the form $\modalf{\modals}{\var}$. With this last observation, we now define the parameter for the number of occurrence of the same node in the tree in Figure \ref{fig:boundK}.

\begin{figure}
\begin{align*}
& K(\prop) = K(\neg\prop) = K(\neg \modalf{\modals}{\true}) = K(\true) = K(\var)=0 \\
& K(\form_1\wedge\form_2) = K(\form_1\vee\form_2)= K(\form_1)+K(\form_2) \\
& K(\modalf{\modals}{\form}) =  K(\ufixpf{\var}{\form}) = K(\form) \\
& K(\countf{}{\trail}{\restrictedform}{\#}{\natn}) = \natn+1
\end{align*}
\caption{Occurrences bound}
\label{fig:boundK}
\end{figure}

Consider for instance the formula $\form = \prop_1\wedge \modalf{\fc}{\countf{}{\ns^\star}{\prop_2}{>}{1}}$. The computed Lean is as follows, where $\psi = \ufixpf{\var}{\prop_2 \vee \modalf{\ns}{\var}} $.
\[\{\prop_1,\prop_2,\prop_3,\modalf{\fc}{\top},\modalf{\ns}{\top},\modalf{\invfc}{\top}, \modalf{\invns}{\top}, \modalf{\fc}{\psi}, \modalf{\ns}{\psi} \}\]

Proposition $\prop_3$ represents names other than $\prop_1$ and $\prop_2$.
We now compute the bound on nodes: $K = 2$.

After the first step, $ST$ consists of the trees of the form $(\{\prop_i\}, \emptyset, \emptyset)$ and $(\{\prop_i, \modalf{\dual{j}}{\top}\}, \emptyset, \emptyset)$, with $i \in \{1,2,3\}$ and $j \in \{\fc,\ns\}$. At this point the three finished trees in $ST$ are tested and found not to satisfy $\form$.

After the second iteration many trees are created, but the one of interest is the following.
\[\tree_0 = (\{\prop_2,\modalf{\ns}{\top},\modalf{\invfc}{\top}, \modalf{\ns}{\psi}\},\emptyset,(\{\prop_2,\modalf{\invns}{\top}\},\emptyset,\emptyset))\]

The third iteration yields the tree $(\{\prop_1, \modalf{\fc}{\psi}, \modalf{\fc}{\top}\},\tree_0,\emptyset)$, which is found to satisfy $\form$ at path $\epsilon$. As the nodes at every step are different, the limit is not reached. Figure \ref{algfig} depicts a graphical representation of the example
where counted nodes are drawn as thick circles.

\begin{figure}
\centering
\begin{tikzpicture}[scale=0.6]. 
\draw [thin,dotted] (-1,3) -- (9,3);
\draw [thin,dotted] (-1,5) -- (9,5);
\draw [thin,dotted] (-1,7) -- (9,7);
\node (p1) at (0,2) [circle,draw] {$\prop_1$};
\node (p2) at (2,2) [circle,draw] {$\prop_2$};
\node (p3) at (4,2) [circle,draw] {$\prop_3$};
\node (p10) at (5,2)  {$\ldots$};
\node (p5) at (7,2) [circle,draw,very thick] {$\prop_2$};
\node (p6) at (6,2)  {$\ldots$};
\node (p7) at (6,4) [circle,draw,very thick] {$\prop_2$};
\node (p8) at (5,6) [circle,draw,very thick] {$\prop_2$};
\node (p9) at (6,8) [circle,draw] {$\prop_1$};
\draw [->] (p9) to node[auto] {$\fc$} (p8);
\draw [->] (p8) to node[auto] {$\ns$} (p7);
\draw [->] (p7) to node[auto] {$\ns$} (p5);
\end{tikzpicture}
\caption{Checking  $\form = \prop_1\wedge \modalf{\fc}{\countf{}{\ns^\star}{\prop_2}{>}{2}}$} \label{algfig}
\end{figure}

\subsection{Termination}
Proving termination of the algorithm is straightforward, as only a finite number of trees may be built and the algorithm stops as soon as it cannot build a new tree.

\subsection{Soundness}

If the algorithm terminates with a candidate, we show that the initial formula is satisfiable. Let $\stree,\pathn$ the \psitree{\form} and path such that $\pathn \vdash^\form_\stree \form$. We extract a tree from $\stree$ and show that the interpretation of $\form$ for this tree includes the node at path $\pathn$.

We write $\tree(\stree)$ for the tree $(\Nodes,\brel,\tlabel)$ defined as follows. We first rewrite $\Gamma$ such that each node $\node$ is replaced by the path to reach it.
\begin{align*}
  path(\node,\stree_1,\stree_2) & \rightarrow (\epsilon, path(\fc,\stree_1), path(\ns,\stree_2))\\
  path(\pathn, (\node,\stree_1,\stree_2)) & \rightarrow (\pathn, path(\pathn \fc,\stree_1), path(\pathn \ns,\stree_2))\\
  path(\pathn, \emptyset) & \rightarrow \emptyset
\end{align*}

We then define:
\begin{itemize}
\item $\Nodes = \stnodes{path(\stree)}$;
\item for every $(\pathn,\stree_1,\stree_2)$ in $path(\stree)$ and $i=\fc,\ns$,
 if $\stree_i\neq\emptyset$ then $R(\pathn,i)=\pathn i$ and $R(\pathn i, \dual{i}) = \pathn$; and
\item for all $\pathn\in\Nodes$ if $\prop\in\stree(\pathn)$ then $\tlabel(\pathn)=\prop$.
\end{itemize}

\begin{lemma}\label{lem:localsound}
  Let $\restrictedform$ a subformula of $\form$ with no counting formula. If $\stree(\pathn) \vdash^\form \restrictedform$ then we have $\pathn \in \semf{\restrictedform}{\tree(\stree)}{\emptyset}$.
\end{lemma}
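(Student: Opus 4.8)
The plan is to prove Lemma~\ref{lem:localsound} by structural induction on the counting-free subformula $\restrictedform$, following the shape of the local entailment relation in Figure~\ref{fig:entailmentnode} and matching each rule against the corresponding clause of the semantics in Figure~\ref{formsem}. Since $\restrictedform$ contains no counting subformula, the only constructors to handle are $\top$, $\neg\top$, $\prop$, $\neg\prop$, variables $\var$, disjunction, conjunction, modalities $\modalf{\modals}{\restrictedform'}$, negated modalities $\neg\modalf{\modals}{\true}$, and fixpoints $\ufixpf{\var}{\restrictedform'}$. The base cases are immediate from the construction of $\tree(\stree)$: $\top$ and $\neg\top$ are trivial; for $\prop$ and $\neg\prop$ we use the fact that exactly one proposition from $\Props_\form\cup\{p_{\overline\form}\}$ lies in each node together with the clause $\tlabel(\pathn)=\prop$ whenever $\prop\in\stree(\pathn)$; for $\neg\modalf{\modals}{\true}$ we use that $R(\pathn,\modals)$ is defined precisely when $\stree_\modals\neq\emptyset$, which by the $\form$-node condition ``$\modalf{\modals}{\restrictedform'}\in\node \Rightarrow \modalf{\modals}{\true}\in\node$'' and by consistency $R^\form$ corresponds to $\modalf{\modals}{\true}\in\stree(\pathn)$.

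The inductive cases for $\vee$ and $\wedge$ follow directly from the induction hypothesis and the fact that the entailment rules and the semantic clauses both decompose these connectives componentwise (for $\vee$ one uses that $\restrictedform_1\vee\restrictedform_2$ with no counting subformula has counting-free disjuncts). The modality case $\modalf{\modals}{\restrictedform'}$ with $\modals\in\{\fc,\ns\}$ is where the consistency relation $\fbrel{\form}{\node_1}{\modals}{\node_2}$ enters: if $\stree(\pathn)\vdash^\form\modalf{\modals}{\restrictedform'}$ then $\modalf{\modals}{\restrictedform'}\in\stree(\pathn)$, hence $\modalf{\modals}{\true}\in\stree(\pathn)$, so $\stree_\modals\neq\emptyset$ and $R(\pathn,\modals)=\pathn\modals$; then the forward-direction clause of consistency, $\modalf{\modals}{\psi}\in\node_1 \iff \node_2\vdash^\form\psi$, gives $\stree(\pathn\modals)\vdash^\form\restrictedform'$, and the induction hypothesis yields $\pathn\modals\in\semf{\restrictedform'}{\tree(\stree)}{\emptyset}$, i.e. $\pathn\in\semf{\modalf{\modals}{\restrictedform'}}{\tree(\stree)}{\emptyset}$. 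The backward modalities $\modalf{\invfc}{\restrictedform'}$, $\modalf{\invns}{\restrictedform'}$ are handled symmetrically using the second line of the consistency definition applied at the parent node, together with the fact that in a finished tree the root carries no pending backward modality, so every node with such a modality genuinely has a parent.

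The fixpoint case $\ufixpf{\var}{\restrictedform'}$ is the main obstacle, and it is the reason the statement is phrased as ``if $\stree(\pathn)\vdash^\form\restrictedform$ then $\pathn\in\semf{\restrictedform}{\cdots}$'' rather than an equivalence. The entailment rule unfolds $\ufixpf{\var}{\restrictedform'}$ once to $\restrictedform'\{\subst{\ufixpf{\var}{\restrictedform'}}{\var}\}$, but a naive induction is not well-founded because this substituted formula is larger. The plan is to argue as follows: the derivation of $\pathn\vdash^\form\ufixpf{\var}{\restrictedform'}$ (lifting Figure~\ref{fig:entailmentnode} to whole nodes/paths) is a \emph{finite} proof tree, because each node of $\tree(\stree)$ is a finite set, the tree itself is finite, and by the bound $K(\form)$ and the $\nmax$ condition no node-label repeats more than $K(\form)+2$ times along any root-to-leaf path, so the fixpoint can only be unfolded finitely often before hitting a modality or a leaf; hence we may do an outer induction on the size of this entailment derivation, with an inner structural induction on $\restrictedform$ for the non-fixpoint cases. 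In the fixpoint case, from $\stree(\pathn)\vdash^\form\ufixpf{\var}{\restrictedform'}$ we get $\stree(\pathn)\vdash^\form\restrictedform'\{\subst{\ufixpf{\var}{\restrictedform'}}{\var}\}$ by a strictly smaller derivation, so by the induction hypothesis $\pathn\in\semf{\restrictedform'\{\subst{\ufixpf{\var}{\restrictedform'}}{\var}\}}{\tree(\stree)}{\emptyset}$; by Lemma~\ref{lem:finite_unfolding}-style unfolding (or directly, since $\semf{\ufixpf{\var}{\restrictedform'}}{}{} $ is a prefixed point and one unfolding of a least fixpoint is below the fixpoint, $\semf{\restrictedform'\{\subst{\ufixpf{\var}{\restrictedform'}}{\var}\}}{\tree(\stree)}{\emptyset}=\semf{\ufixpf{\var}{\restrictedform'}}{\tree(\stree)}{\emptyset}$), we conclude $\pathn\in\semf{\ufixpf{\var}{\restrictedform'}}{\tree(\stree)}{\emptyset}$. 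One subtlety to check carefully is the interaction between the path-extension rules for backward modalities (the clauses $(\node,\stree_1,\stree_2)(\pathn\fc\invfc)=(\node,\stree_1,\stree_2)(\pathn)$) and the definition of $R$ in $\tree(\stree)$, ensuring that $\stree(\pathn\modals)$ and $R(\cdot,\cdot)$ navigation agree on all paths that actually occur in derivations; this is routine but must be stated.
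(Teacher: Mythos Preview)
Your approach is essentially the same as the paper's: both argue by an induction that mixes formula structure with a fixpoint-unfolding measure, dispatch the base cases directly from the construction of $\tree(\stree)$, and handle the modality step via the consistency relation $R^\form$ to move from $\stree(\pathn)$ to $\stree(\pathn\modals)$. The paper phrases the well-founded order as the lexicographic pair (number of unfoldings of $\restrictedform$ required for $\tree(\stree)$, size of the formula); you phrase it as an outer induction on the size of the entailment derivation with an inner structural induction. These are interchangeable once well-foundedness is established.

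One point to tighten: your argument that the entailment derivation is finite appeals to the $K(\form)$ bound and the $\nmax$ condition. That bound is what makes the \emph{algorithm} terminate (and is used for completeness); it is not what guarantees finiteness of the entailment derivation for a given $\stree$. A finite tree alone does not suffice, since without further restrictions a fixpoint could in principle oscillate across a modality and its converse and revisit the same node indefinitely. What actually ensures termination here is the cycle-freeness restriction (Section~\ref{cyclefreeness} and Lemma~\ref{lem:finite_unfolding}): recursion variables never occur under both a modality and its converse, so each unfolding makes monotone progress through the finite tree. The paper's first component ``number of unfoldings required for $\tree(\stree)$'' is precisely the finite bound supplied by Lemma~\ref{lem:finite_unfolding}. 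Replace your $K(\form)$ appeal by cycle-freeness and your argument goes through unchanged.
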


\begin{proof}
  We proceed by induction on the lexical ordering of the number of unfolding of $\restrictedform$ that are required for $\tree(\stree)$, and of the size of the formula.
  
  The base cases are $\true$, atomic or counting propositions, and negated forms. These are immediate by definition of $\semf{\restrictedform}{\tree(\stree)}{\emptyset}$. The cases for disjunction and conjunction are immediate by induction (the formula is smaller). The case for fixpoints is also immediate by induction, as the number of unfoldings required decreases, and as $\semf{\ufixpf{\var}{\restrictedform}}{\tree(\stree)}{\emptyset} = \semf{\restrictedform\{\subst{\ufixpf{\var}{\restrictedform}}}{\tree(\stree)}{\emptyset}$.
  
  The last case is the presence of a modality $\modalf{\modals}\restrictedform$ from the $\form$node $\stree(\pathn)$. In this case we rely on the fact that the nodes $\stree(\pathn \modals)$ and $\stree(\pathn)$ are consistent to derive $\pathn \modals \vdash^\form \restrictedform$. We then conclude by induction.
\end{proof}

\begin{theorem}[Soundness]\label{sound} If
$\pathn \vdash^\form_\stree \form$ then $\pathn \in \semf{\form}{\tree(\stree)}{\emptyset}$
\end{theorem}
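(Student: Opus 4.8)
The plan is to read Theorem~\ref{sound} as the extension of Lemma~\ref{lem:localsound} from counting-free subformulas to the whole formula $\form$, proved by induction on the derivation of $\pathn \vdash^\form_\stree \form$ along the rules of Figure~\ref{fig:countingentailment}. Because the grammar of Figure~\ref{normalformsyn} forbids counting operators under fixpoints and under other counting operators, the body $\restrictedform$ of every counting subformula, as well as each freshly introduced counting proposition $c$, is counting-free, so Lemma~\ref{lem:localsound} applies verbatim to those. Consequently the rule whose premise is ``$\form^\prime$ does not contain counting formulas'' \emph{is} Lemma~\ref{lem:localsound}, the rules for $\wedge$ and $\vee$ are immediate from the induction hypothesis together with the reading of these connectives as $\cap$ and $\cup$, and the only genuinely new work lies in the two counting rules.

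For the modality rule, $\pathn\modals \vdash^\form_\stree \form^\prime$ and the induction hypothesis put the $\tree(\stree)$-node reached at $\pathn\modals$ inside $\semf{\form^\prime}{\tree(\stree)}{\emptyset}$, and I would check from the $path(\cdot)$ rewriting and the definition of $\brel$ in $\tree(\stree)$ — including the clauses of $\stree(\cdot)$ and of $\brel$ that cancel a modality against its converse — that $\brel$ maps the node at $\pathn$ under $\modals$ to exactly that node, hence $\pathn \in \semf{\modalf{\modals}{\form^\prime}}{\tree(\stree)}{\emptyset}$. The recurring bookkeeping point, here and in the counting rules, is that $\stree(\cdot)$ is partial and that a path with backward steps names a node only once it reduces to a forward path; I would isolate this once as an auxiliary claim: \emph{for a modality sequence $\pathn^\prime$, $\stree(\pathn\pathn^\prime)$ is defined and $\pathn^\prime \in \trail$ if and only if $\pathn^\prime$ realizes a $\brel$-trail of $\tree(\stree)$ belonging to $\trail$ from the node at $\pathn$ to the node at $\stree(\pathn\pathn^\prime)$}, and moreover, identifying nodes with their positions as in $path(\stree)$, this correspondence matches the $\psitree$-node reached with the $\tree(\stree)$-node reached and is a bijection between the trail-navigations out of $\pathn$ and the $\brel$-trails in $\trail$ out of $\pathn$. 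It is proved by induction on trails from the $nav$ equations and the definition of $\tree(\stree)$, and it is where the restriction forbidding $\modals,\dual\modals$ factors and cycle-freeness are used; it is the bridge between the syntactic navigation of Figure~\ref{fig:countingentailment} and the relation $\pathh{\cdot}{\trail}{\cdot}$ of the semantics.

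Given the auxiliary claim the counting cases are short. For $\countf{c}{\trail}{\restrictedform}{>}{\natn}$: the premise yields more than $\natn$ pairwise distinct nodes $\node^\prime$, each reached from $\pathn$ along some $\pathn^\prime \in \trail$ with $\node^\prime \vdash^\form \restrictedform \wedge c$; by the auxiliary claim each is trail-reachable in $\tree(\stree)$, by Lemma~\ref{lem:localsound} each lies in $\semf{\restrictedform}{\tree(\stree)}{\emptyset}$, and distinct positions are distinct nodes of $\tree(\stree)$, so the count defining $\semf{\countf{}{\trail}{\restrictedform}{>}{\natn}}{\tree(\stree)}{\emptyset}$ at $\pathn$ exceeds $\natn$. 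For $\countf{c}{\trail}{\restrictedform}{\leq}{\natn}$: from the second premise, Lemma~\ref{lem:localsound}, the $\vee$-rule of local entailment (Figure~\ref{fig:entailmentnode}), and Lemma~\ref{lem:localsound} applied to $\neg\restrictedform$ to discard the wrong disjunct, one gets that on every node trail-reachable from $\pathn$ the properties $\restrictedform$ and $c$ coincide in $\tree(\stree)$; hence the set of trail-reachable nodes satisfying $\restrictedform$ equals the set satisfying $\restrictedform\wedge c$, which by the auxiliary claim is the finite set the first premise bounds by $\natn$, so $\pathn \in \semf{\countf{}{\trail}{\restrictedform}{\leq}{\natn}}{\tree(\stree)}{\emptyset}$.

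The main obstacle is the auxiliary claim, and specifically its ``covering'' half needed in the $\leq$ case: one must show that \emph{no} $\brel$-trail of $\tree(\stree)$ lying in $\trail$ escapes the enumeration over $\pathn^\prime \in \trail$ in the entailment rule, i.e.\ that every $\brel$-path realizing $\trail$ in the extracted tree is the image of a syntactic navigation in $path(\stree)$, and that this matching transfers cardinalities exactly. Everything else is a direct appeal to Lemma~\ref{lem:localsound} or a routine unfolding of the definitions of $\tree(\stree)$ and of the semantics.
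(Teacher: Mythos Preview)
Your proposal is correct and follows essentially the same route as the paper: induction on the derivation of global entailment, with Lemma~\ref{lem:localsound} discharging the counting-free base case, and the counting cases handled by extracting $\restrictedform$ (respectively $\neg\restrictedform$) from the conjunctions $\restrictedform\wedge c$ and $\neg\restrictedform\wedge\neg c$ in the premises. The one point you develop more carefully than the paper is the auxiliary claim identifying syntactic trail navigation in $\stree$ with $\brel$-trails in $\tree(\stree)$; the paper's proof leaves this correspondence entirely implicit, so your version is in fact more rigorous on exactly the step you flag as the main obstacle.
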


\begin{proof}
  We proceed by induction on the derivation of $\pathn \vdash^\form_\stree \form$.
  
  The proof is a consequence of the more general result $\pathn^\prime \vdash^\form_\stree \form^\prime \implies \pathn^\prime \in \semf{\form^\prime}{\tree(\stree)}{\emptyset}$ for any subformula of $\form^\prime$, by induction on the derivation of $\stree(\pathn^\prime) \vdash^\form_{\stree,\pathn} \form^\prime$. If $\form^\prime$ has no counting formula, the result is immediate by Lemma~\ref{lem:localsound}. Most cases are immediate by induction. As concerns the case for counting formulas, each hypothesis $\node^\prime \vdash^\form \restrictedform \land c$ has as hypothesis $\node^\prime \vdash^\form \restrictedform$. This is enough to conclude by induction for the ``greater than'' case. For the ``less than'' case, every node that is not counted has to satisfy $\neg \restrictedform \land \neg c$, so in particular $\neg \restrictedform$, and we conclude by induction.
\end{proof}

\subsection{Completeness}

Our proof proceeds in two step. We build a \psitree{\form} that satisfies the
formula, then we proceed to show it is actually built by the algorithm.

Assume that formula $\form$ is satisfiable by a tree
$\tree$. We consider the smallest such tree (i.e., the tree with the fewest
number of nodes) and fix $\node^\star$, a node witnessing satisfiability.

We now build a \psitree{\form} homomorphic to $\tree$, called the Lean labeled
version of $\form$, written $\stree(\tree,\form)$, and defined as follows.

First, we annotate counted nodes along with their corresponding counting proposition, yielding a new tree $\tree_c$. Starting from $\node^\star$ and by induction on $\form$, we proceed as follows. For formulas with no counting subformula, including recursion, we stop. For conjunction and disjunction of formulas, we recursively annotate according to both subformulas. For modalities, recursively annotate from the node under the modality. For $\countf{c}{\trail}{\restrictedform}{\leq}{\natn}$, we annotate every selected node with the counting proposition corresponding to the formula. For $\countf{c}{\trail}{\restrictedform}{>}{\natn}$, we annotate exactly $\natn+1$ selected nodes.

We now extend the semantics of formulas to take into account counting propositions and annotated nodes, written $\semfc{\cdot}{\tree}{\valuation}$. The definition is identical to Figure \ref{formsem}, with one addition and two changes. The addition is for counting propositions, which we define as $\node \in \semfc{c}{\tree}{\valuation}$ iff $\node$ is annotated by $c$. The two changes are for counting propositions, which we define as follows, selected only nodes that are annotated.

\begin{align*}
\semfc{\countf{}{\alpha}{\form^\prime}{\leq}{\natn}}{\tree}{\valuation} &=
    \{\node , |\{\node^\prime\in \semfc{\form^\prime}{\tree}{\valuation} \cap \semfc{c}{\tree}{\valuation} , \pathh{\node}{\alpha}{\node^\prime}\}|\leq  \natn \}\\
\semfc{\countf{}{\alpha}{\form^\prime}{>}{\natn}}{\tree}{\valuation} &=
    \{\node ,  |\{\node^\prime\in \semfc{\form^\prime}{\tree}{\valuation} \cap \semfc{c}{\tree}{\valuation} , \pathh{\node}{\alpha}{\node^\prime}\} |>  \natn \}
\end{align*}

We show that this modification of the semantics does no change the satisfiability of the formula.

\begin{lemma}\label{lem:annotated_satisfiability}
  We have $\node^\star \in \semfc{\form}{\tree}{\emptyset}$.
\end{lemma}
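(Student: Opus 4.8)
The plan is to prove, by induction on the structure of $\form$ (and, for counting subformulas, on their number), a stronger statement relating the annotated semantics $\semfc{\cdot}{\tree}{\valuation}$ to the original semantics $\semf{\cdot}{\tree}{\valuation}$, from which $\node^\star \in \semfc{\form}{\tree}{\emptyset}$ follows. The natural invariant is: for every subformula $\restrictedform$ of $\form$ that is encountered during the annotation procedure at a node $\node$, if $\node \in \semf{\restrictedform}{\tree}{\valuation}$ then $\node \in \semfc{\restrictedform}{\tree}{\valuation}$ — and in fact for the non-counting fragment the two semantics coincide, since annotation does not touch those connectives and counting propositions do not occur in them. One then applies this with $\restrictedform = \form$, $\node = \node^\star$, and $\valuation = \emptyset$, using the hypothesis $\node^\star \in \semf{\form}{\tree}{\emptyset}$ which holds because $\tree$ is a model of $\form$ witnessed at $\node^\star$.

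The induction goes as follows. For $\true$, atomic propositions and their negations, negated modalities, and recursion variables, the two semantics are literally the same clause, so there is nothing to do; and since the annotation stops at subformulas with no counting subformula (including fixpoints), these are exactly the leaves of the recursion. For disjunction and conjunction, the annotation recurses into both subformulas at the same node, so the claim follows directly from the induction hypotheses and the fact that $\semfc{\cdot}{\tree}{\valuation}$ distributes over $\cup$ and $\cap$ just as $\semf{\cdot}{\tree}{\valuation}$ does. For a modality $\modalf{\modals}{\restrictedform}$, the annotation recurses at the node $\brel(\node,\modals)$ reached under $\modals$; if $\node \in \semf{\modalf{\modals}{\restrictedform}}{\tree}{\valuation}$ then $\brel(\node,\modals) \in \semf{\restrictedform}{\tree}{\valuation}$, hence by induction $\brel(\node,\modals) \in \semfc{\restrictedform}{\tree}{\valuation}$, so $\node \in \semfc{\modalf{\modals}{\restrictedform}}{\tree}{\valuation}$.

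The only real content is in the two counting cases, and this is where I expect the main obstacle to lie. Suppose $\node \in \semf{\countf{c}{\trail}{\restrictedform}{>}{\natn}}{\tree}{\valuation}$, i.e.\ there are at least $\natn+1$ nodes $\node'$ with $\pathh{\node}{\trail}{\node'}$ and $\node' \in \semf{\restrictedform}{\tree}{\valuation}$. The annotation procedure picks exactly $\natn+1$ such nodes and marks each with $c$; it also recursively annotates from each of them according to $\restrictedform$. By the induction hypothesis applied at each chosen $\node'$, we get $\node' \in \semfc{\restrictedform}{\tree}{\valuation}$, and by construction $\node' \in \semfc{c}{\tree}{\valuation}$, so $\node'$ lies in $\semfc{\restrictedform}{\tree}{\valuation} \cap \semfc{c}{\tree}{\valuation}$; since there are $\natn+1 > \natn$ of them, $\node \in \semfc{\countf{}{\trail}{\restrictedform}{>}{\natn}}{\tree}{\valuation}$. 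The delicate point is the $\leq$ case: here the annotation marks \emph{every} node selected through $\trail$ that satisfies $\restrictedform$. One must check two things. First, the count does not increase: the annotated count is the number of $\trail$-reachable nodes in $\semfc{\restrictedform}{\tree}{\valuation} \cap \semfc{c}{\tree}{\valuation}$; since annotation marks with $c$ exactly the $\restrictedform$-satisfying reachable nodes, this set is at most the set of $\trail$-reachable nodes in $\semf{\restrictedform}{\tree}{\valuation}$, which has size $\leq \natn$ by hypothesis — one needs that a node $\node'$ with $\node' \in \semfc{\restrictedform}{\tree}{\valuation}$ and $\node' \in \semfc{c}{\tree}{\valuation}$ is indeed $\restrictedform$-satisfying in the original sense, which is the converse direction of the induction hypothesis for $\restrictedform$ (so one actually wants the two-sided statement that on the relevant subformulas the two semantics agree, not just one inclusion; the restriction to cycle-free, fixpoint-free, non-nested counting subformulas $\restrictedform$ makes this clean because $\restrictedform$ itself contains no counting proposition and no fixpoint, so its annotated and unannotated semantics are identical by the leaf cases). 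Second, one must verify that annotating a node with $c$ has no side effect on any other counting subformula's count — this holds because each counting subformula gets a fresh counting proposition $c$, so the annotations for distinct counting subformulas are independent. Assembling these observations gives $\node \in \semfc{\countf{}{\trail}{\restrictedform}{\leq}{\natn}}{\tree}{\valuation}$, completing the induction and hence the lemma.
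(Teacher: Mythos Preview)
Your proposal is correct and follows essentially the same approach as the paper: induction along the structure of $\form$ (the paper phrases it as recursion on the derivation of $\node^\star \in \semf{\form}{\tree}{\emptyset}$), with the non-counting cases immediate and the two counting cases handled via the annotation together with the key observation that $\restrictedform$ contains no counting subformula, so $\semfc{\restrictedform}{\tree}{\valuation} = \semf{\restrictedform}{\tree}{\valuation}$ (and likewise for $\neg\restrictedform$). One small inaccuracy worth flagging: you assert that $\restrictedform$ ``contains no fixpoint,'' but the grammar for $\restrictedform$ does permit $\ufixpf{\var}{\restrictedform}$; the property you actually need and use is only that $\restrictedform$ contains no counting subformula (hence no counting proposition), which suffices for the two semantics to coincide on it.
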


\begin{proof}
  We proceed by recursion on the derivation $\node^\star \in \semf{\form}{\tree}{\emptyset}$. The cases where no counting formula is involved, thus including fixpoints, are immediate, as the selected nodes are identical. The disjunction, conjunction, and modality cases are also immediate by induction. The interesting cases are the counting formulas.
  
  For $\countf{c}{\trail}{\restrictedform}{>}{\natn}$, as there are exactly $\natn+1$ nodes annotated, the property is true by induction. For $\countf{c}{\trail}{\restrictedform}{\leq}{\natn}$, we rely on the fact that every counted node is annotated. We conclude by remarking that $\restrictedform$ does not contain a counting formula, thus we have $\semfc{\restrictedform}{\tree}{\valuation} = \semf{\restrictedform}{\tree}{\valuation}$ and $\semfc{\neg \restrictedform}{\tree}{\valuation} = \semf{\neg \restrictedform}{\tree}{\valuation}$.
\end{proof}

To every node $\node$, we associate $\fnode{\form}$, a subset of formulas of the Lean selecting the node.
\begin{equation*}
\fnode{\form} = \{\form_0\mid \node \in \semf{\form_0}{\tree}{\emptyset} ,\form_0\in\lean{\form}\}
\end{equation*}

Note that this is a $\form$-node as it contains one and exactly one proposition, and if it includes a modal formula $\modalf{\modals}{\psi}$, then  it also includes $\modalf{\modals}{\top}$.

The tree $\stree(\tree,\form)$ is then built homomorphically to $\tree$.

In the remainder of this section, we write $\stree$ for $\stree(\tree,\form)$.
We first check that $\stree$ is consistent, starting with local consistency.

\begin{figure}
  \begin{mathpar}
    \inferrule*{\psi \in \lean{\form}}{\psi \induced \lean{\form}}\and
    \inferrule*{\psi_1 \induced \lean{\form} \\ \psi_2 \induced \lean{\form}}{\psi_1 \land \psi_2 \induced \lean{\form}} \and
    \inferrule*{\psi_1 \induced \lean{\form} \\ \psi_2 \induced \lean{\form}}{\psi_1 \lor \psi_2 \induced \lean{\form}} \and
    \inferrule*{ }{\true \induced \lean{\form}} \and
    \inferrule*{\psi \in (\Props_{\form} \cup \modalf{\modals}\true \cup C) }{\neg\psi \induced \lean{\form}}
  \end{mathpar}
  \caption{Formula induced by a lean}
  \label{fig:induced_by_lean}
\end{figure}

In the following, we say a formula $\psi$ is induced by the lean of $\form$,
written $\psi \induced \lean{\form}$, if it consists of the conjunction and
disjunction of formulas from the lean as defined in
Figure~\ref{fig:induced_by_lean}.

\begin{lemma}\label{lem:formulas_induced}
  Let $\modalf{\modals}\psi$ be a formula in $\lean{\form}$, and let $\psi^\prime$ be $\psi$ after unfolding its fixpoint formulas not under modalities. We have $\psi^\prime \induced \lean{\form}$.
\end{lemma}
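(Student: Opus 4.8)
The plan is to reduce the statement to the following auxiliary claim and prove that by a well-founded induction:

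\medskip
\noindent\emph{Claim.} For every navigating formula $\chi$ with $\chi \in \flcl{\form}$, the formula $\chi'$ obtained from $\chi$ by unfolding every fixpoint subformula of $\chi$ that does not occur under a modality satisfies $\chi' \induced \lean{\form}$.
\medskip

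\noindent The lemma follows at once: since $\modalf{\modals}\psi \in \lean{\form}$ we have $\modalf{\modals}\psi \in \flcl{\form}$, hence by the Fisher-Ladner rule $\fishrel{\modalf{\modals}\psi}{\psi}$ and closure of $\flcl{\form}$ under $\fishrel{}{}$ we get $\psi \in \flcl{\form}$; moreover $\psi$ is navigating, because the lean only gathers navigating modal formulas. Applying the claim to $\chi = \psi$ gives $\psi' \induced \lean{\form}$.

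To prove the claim I would induct on the pair $(e(\chi), |\chi|)$ ordered lexicographically, where $e(\chi)$ is the number of fixpoint subformulas of $\chi$ occurring not under any modality, and $|\chi|$ is the size of $\chi$; the case analysis is on the shape of $\chi$. If $\chi$ is atomic — $\true$, a proposition of $\Props_\form$, a counting proposition of $C$, a negated proposition or negated counting proposition, or $\neg\modalf{\modals}\true$ — then $\chi' = \chi$ and membership in $\induced \lean{\form}$ is one of the rules of Figure~\ref{fig:induced_by_lean} (using that every proposition occurring in $\psi$ occurs in $\form$, so lies in $\Props_\form$, and that $\modalf{\modals}\true$ and the counting propositions are exactly the negatable atoms). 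If $\chi = \chi_1 \land \chi_2$ or $\chi = \chi_1 \lor \chi_2$, then $\chi_1,\chi_2 \in \flcl{\form}$ by the Fisher-Ladner rules for conjunction and disjunction, both are navigating, and each satisfies $e(\chi_i)\le e(\chi)$ and $|\chi_i|<|\chi|$, so the induction hypothesis applies; since unfolding commutes with the Boolean connectives, $\chi' = \chi_1' \land \chi_2'$ (resp.\ $\chi_1' \lor \chi_2'$), which is $\induced \lean{\form}$ by the matching rule. If $\chi = \modalf{\modals}{\chi_1}$, no unfolding occurs under the modality, so $\chi' = \chi$; being a navigating modal formula in $\flcl{\form}$, $\chi$ belongs to $\lean{\form}$ by definition of the lean, hence $\chi' \induced \lean{\form}$ by the first rule of Figure~\ref{fig:induced_by_lean}.

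The only delicate case is $\chi = \ufixpf{\var}{\chi_1}$, where $\chi'$ is by definition the unfolding of $\chi_1\{\subst{\ufixpf{\var}{\chi_1}}{\var}\}$. By the Fisher-Ladner rule for fixpoints, $\chi_1\{\subst{\ufixpf{\var}{\chi_1}}{\var}\} \in \flcl{\form}$ and it is navigating. To apply the induction hypothesis I invoke the guardedness observation stated just before the lemma: $\chi_1$ may be taken so that $\var$ occurs in it only within subformulas of the form $\modalf{\modals}{\var}$. Consequently every copy of $\ufixpf{\var}{\chi_1}$ introduced by the substitution sits under a modality and does not count towards $e$, so $e(\chi_1\{\subst{\ufixpf{\var}{\chi_1}}{\var}\}) = e(\chi_1) = e(\chi) - 1 < e(\chi)$; the induction hypothesis then yields that the unfolding of $\chi_1\{\subst{\ufixpf{\var}{\chi_1}}{\var}\}$ — which is exactly $\chi'$ — is $\induced \lean{\form}$.

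I expect the main obstacle to be precisely this fixpoint case: guaranteeing a well-founded induction requires the guardedness normal form for fixpoint bodies, so that each unfolding genuinely strictly decreases the exposed-fixpoint measure rather than merely reshuffling fixpoints. A secondary point to check carefully is that every modal or atomic formula encountered while descending through the Boolean/fixpoint structure of $\psi$ indeed lies in $\lean{\form}$; this is immediate from closure of $\flcl{\form}$ under the Fisher-Ladner relation together with the fact that $\lean{\form}$ contains all navigating modal formulas of $\flcl{\form}$, all propositions of $\form$, and all counting propositions.
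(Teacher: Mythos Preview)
Your proof is correct and essentially unpacks what the paper states in a single line: ``By definition of the lean and of the $\induced$ relation.'' The paper treats the result as immediate because the Fisher--Ladner closure is, by construction, closed under Boolean decomposition and one-step fixpoint unfolding, so descending through $\psi$ while unfolding exposed fixpoints keeps every intermediate formula in $\flcl{\form}$; the atoms one bottoms out at (modal formulas, propositions, counting propositions, $\top$, and the allowed negations) are then in $\lean{\form}$ or covered by the $\induced$ rules. Your well-founded induction on $(e(\chi),|\chi|)$ is a careful formalisation of exactly this observation, and your identification of guardedness as the point that makes the fixpoint case terminate is the right one (the paper relies on the same normalisation, stated explicitly just before the definition of $K(\form)$).
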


\begin{proof}
  By definition of the lean and of the $\induced$ relation.
\end{proof}

\begin{lemma}\label{lem:complete_entailment}
  Let $\psi$ be a formula induced by $\lean{\form}$. We have $\node \in \semfc{\psi}{\tree}{\emptyset}$ if and only if $\fnode{\form} \vdash^{\form} \psi$.
\end{lemma}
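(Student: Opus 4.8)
The plan is to prove the equivalence by induction on the derivation that $\psi$ is induced by $\lean{\form}$, i.e.\ on the rules of Figure~\ref{fig:induced_by_lean}. The whole argument rests on the observation that, restricted to formulas induced by the lean, both the semantic membership $\node \in \semfc{\cdot}{\tree}{\emptyset}$ and the local entailment $\fnode{\form} \vdash^{\form} \cdot$ are purely Boolean over the same generators: the lean formulas themselves, $\true$, and the negations $\neg\psi_0$ with $\psi_0 \in \Props_{\form}\cup\{\modalf{\modals}{\true}\}\cup C$. The key base identity is that for any $\psi_0 \in \lean{\form}$ one has $\fnode{\form} \vdash^{\form}\psi_0$ iff $\psi_0 \in \fnode{\form}$ iff $\node \in \semfc{\psi_0}{\tree}{\emptyset}$: the second equivalence is the definition of $\fnode{\form}$, and the first holds because the local entailment relation of Figure~\ref{fig:entailmentnode} has no rule whose conclusion matches a modal formula, a proposition, or a counting proposition other than the membership rule ``$\restrictedform\in\node$'', while a lean formula is never a top-level conjunction, disjunction, or fixpoint, so no other rule can fire on it.

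For the base cases: if $\psi \in \lean{\form}$, we are done by the identity above. If $\psi = \true$, both sides hold unconditionally, by the axiom of Figure~\ref{fig:entailmentnode} and by $\semfc{\true}{\tree}{\emptyset} = \Nodes$. If $\psi = \neg\psi_0$ with $\psi_0 \in \Props_{\form}\cup\{\modalf{\modals}{\true}\}\cup C$, then the only entailment rule that can conclude a negation gives $\fnode{\form}\vdash^{\form}\neg\psi_0$ iff $\psi_0 \notin \fnode{\form}$, which by the definition of $\fnode{\form}$ is equivalent to $\node \notin \semfc{\psi_0}{\tree}{\emptyset}$, hence to $\node \in \semfc{\neg\psi_0}{\tree}{\emptyset}$; the last step uses that for each of these three kinds of generator the semantics of its negation is exactly the set complement of the semantics of the generator, which is read off directly from Figure~\ref{formsem} for propositions and for $\modalf{\modals}{\true}$, and from the extension of the semantics to counting propositions. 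This is precisely why the $\induced$ relation permits negation only on these generators.

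For the inductive cases, if $\psi = \psi_1 \wedge \psi_2$ with $\psi_1,\psi_2 \induced \lean{\form}$, then $\node \in \semfc{\psi_1\wedge\psi_2}{\tree}{\emptyset}$ iff $\node \in \semfc{\psi_1}{\tree}{\emptyset}$ and $\node \in \semfc{\psi_2}{\tree}{\emptyset}$, which by the induction hypothesis is equivalent to $\fnode{\form}\vdash^{\form}\psi_1$ and $\fnode{\form}\vdash^{\form}\psi_2$, which by the $\wedge$-rule of Figure~\ref{fig:entailmentnode} is equivalent to $\fnode{\form}\vdash^{\form}\psi_1\wedge\psi_2$ (no other rule can conclude a top-level conjunction that is not itself in the lean). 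The case $\psi = \psi_1 \vee \psi_2$ is symmetric, using the two $\vee$-rules of Figure~\ref{fig:entailmentnode}, so that $\fnode{\form}\vdash^{\form}\psi_1\vee\psi_2$ holds iff $\fnode{\form}\vdash^{\form}\psi_1$ or $\fnode{\form}\vdash^{\form}\psi_2$, matching the union in the semantics; this closes the induction.

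I expect no genuine obstacle here, as the statement is essentially a bookkeeping lemma. The two points that need real care are (a) ruling out ``spurious'' entailment derivations for lean formulas and for compound induced formulas, which relies entirely on the restricted syntactic shape of lean formulas and of induced formulas, and (b) the complement identity for negation in the third base case, which is exactly the reason the $\induced$ relation forbids negating arbitrary lean formulas (in particular a modal formula $\modalf{\modals}{\form'}$, whose negation is not simply the complement of its semantics). A minor loose end is reconciling the $\semf$ used in the definition of $\fnode{\form}$ with the $\semfc$ of the statement; they agree on the relevant lean formulas, since counting propositions are atomic for both semantics and the two semantics differ only on counting formulas.
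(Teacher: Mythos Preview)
Your proposal is correct and follows essentially the same approach as the paper: structural induction on the formation of $\psi$ as an induced formula, with the base cases reducing to the definition of $\fnode{\form}$ and the inductive cases being the Boolean connectives. The paper's own proof is a two-sentence sketch of exactly this argument; your version is simply a full expansion of it, and the extra care you take (ruling out spurious entailment derivations via the syntactic shape of lean formulas, explaining why negation is restricted in the $\induced$ relation, and flagging the $\semf$ versus $\semfc$ discrepancy in the definition of $\fnode{\form}$) is well placed and addresses points the paper leaves implicit.
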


\begin{proof}
   We proceed by induction on $\psi$. The base cases (the formula is in the $\form$-node or is a negation of a lean formula not in the $\form$-node) hold by definition of $\fnode{\form}$. The inductive cases are straightforward as these formulas only contain fixpoints under modalities.
\end{proof}

\begin{lemma}\label{lem:complete_consistent}
  Let $\node_1$ and $\node_2$ such that $\brel(\node_1,\modals) = \node_2$ with $\modals \in \{\fc,\ns\}$. We have $\fbrel{\form}{\fnode{\form}_1}{\modals}{\fnode{\form}_2}$.
\end{lemma}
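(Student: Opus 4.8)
The plan is to unwind the definition of $\fbrel{\form}{\fnode{\form}_1}{\modals}{\fnode{\form}_2}$ into its two conjuncts — the clause about forward modalities $\modalf{\modals}\psi$ read at $\node_1$, and the clause about backward modalities $\modalf{\dual\modals}\psi$ read at $\node_2$ — and to prove each by a short chain of equivalences that goes through the semantics $\semfc{\cdot}{\tree}{\emptyset}$ and terminates with Lemma~\ref{lem:complete_entailment}.

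For the forward clause, fix $\modalf{\modals}\psi \in \lean{\form}$. By the definition of $\fnode{\form}_1$ we have $\modalf{\modals}\psi \in \fnode{\form}_1$ iff $\node_1 \in \semfc{\modalf{\modals}\psi}{\tree}{\emptyset}$; by the semantics of the modality this holds iff $\brel(\node_1,\modals) \in \semfc{\psi}{\tree}{\emptyset}$, i.e.\ (using the hypothesis $\brel(\node_1,\modals)=\node_2$) iff $\node_2 \in \semfc{\psi}{\tree}{\emptyset}$. It remains to match $\node_2 \in \semfc{\psi}{\tree}{\emptyset}$ with $\fnode{\form}_2 \vdash^\form \psi$. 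Here I would invoke Lemma~\ref{lem:complete_entailment}, but that lemma is stated for formulas induced by $\lean{\form}$, whereas $\psi$ may contain fixpoint operators that are not guarded by a modality. So I first pass to $\psi^\prime$, the result of unfolding exactly those top-level fixpoints once; Lemma~\ref{lem:formulas_induced} gives $\psi^\prime \induced \lean{\form}$, fixpoint unfolding preserves the interpretation so $\semfc{\psi}{\tree}{\emptyset} = \semfc{\psi^\prime}{\tree}{\emptyset}$, and $\fnode{\form}_2 \vdash^\form \psi \iff \fnode{\form}_2 \vdash^\form \psi^\prime$ because the only rule in Figure~\ref{fig:entailmentnode} whose conclusion is a fixpoint formula is the unfolding rule (a $\form$-node contains no bare fixpoint formula). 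Lemma~\ref{lem:complete_entailment} applied to $\psi^\prime$ then closes the chain, yielding $\modalf{\modals}\psi \in \fnode{\form}_1 \iff \fnode{\form}_2 \vdash^\form \psi$.

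For the backward clause, fix $\modalf{\dual\modals}\psi \in \lean{\form}$. The argument is symmetric; the one new ingredient is that the tree shape of $\brel$ forces $\brel(\node_2,\dual\modals)=\node_1$: since $\brel(\node_1,\modals)=\node_2$ with $\modals \in \{\fc,\ns\}$, the converse edge out of $\node_2$ points back to $\node_1$ (this is part of what it means for $\brel$ to form a tree structure). Hence $\modalf{\dual\modals}\psi \in \fnode{\form}_2$ iff $\node_2 \in \semfc{\modalf{\dual\modals}\psi}{\tree}{\emptyset}$ iff $\brel(\node_2,\dual\modals) \in \semfc{\psi}{\tree}{\emptyset}$ iff $\node_1 \in \semfc{\psi}{\tree}{\emptyset}$, and the same unfolding-plus-Lemma~\ref{lem:complete_entailment} step gives $\fnode{\form}_1 \vdash^\form \psi$.

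I expect the only genuine subtlety — and hence the main obstacle — to be the bookkeeping around unguarded fixpoints: making precise that $\psi$ and its one-step unfolding $\psi^\prime$ have the same interpretation and are equi-derivable under $\vdash^\form$, so that Lemma~\ref{lem:complete_entailment}, which is phrased only for formulas induced by the lean, may legitimately be applied. Once that is set up, the proof is just definition chasing together with the tree-structure property of $\brel$.
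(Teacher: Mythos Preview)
Your proposal is correct and follows essentially the same approach as the paper: pass through the semantics via the definition of $\fnode{\form}_i$, unfold the unguarded fixpoints of $\psi$ to obtain a formula induced by the lean, and then invoke Lemmas~\ref{lem:formulas_induced} and~\ref{lem:complete_entailment}. Your treatment is in fact more explicit than the paper's, which only spells out the forward clause and leaves the backward clause (and the equi-derivability of $\psi$ and $\psi'$ under $\vdash^\form$) implicit.
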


\begin{proof}
  Let $\modalf{\modals}\psi$ be a formula in $\lean{\form}$. We show that $\modalf{\modals}\psi \in \fnode{\form}_1 \iff \fnode{\form}_2 \vdash^\form \psi$. We have $\modalf{\modals}\psi \in \fnode{\form}_1$ if and only if 
  $\node_1 \in \semfc{\modalf{\modals}\psi}{\tree}{\emptyset}$ by definition of $\fnode{\form}_1$, which in turn holds if and only if $\node_2 = \brel(\node_1,\modals) \in \semfc{\psi}{\tree}{\emptyset}$. 
We now consider $\psi^\prime$ which is $\psi$ after unfolding its fixpoint formulas not under modalities. We have $\semfc{\psi^\prime}{\tree}{\emptyset} = \semfc{\psi}{\tree}{\emptyset}$ and we conclude by Lemmas \ref{lem:formulas_induced} and \ref{lem:complete_entailment}.
\end{proof}

We now turn to global consistency, taking counting formulas into account.

\begin{lemma}\label{comple1}
  Let $\form_s$ be a subformula of $\form$, and $\pathn$ be a path from the root in $\tree$ such that $\tree(\pathn) \in \semfc{\form_s}{\tree}{\emptyset}$. We then have $\pathn \vdash^{\form}_{\stree} \form_s$.
\end{lemma}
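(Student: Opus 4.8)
The plan is to prove Lemma~\ref{comple1} by induction on the structure of $\form_s$, following the shape of the rules of Figure~\ref{fig:countingentailment}. Two facts are used throughout. First, since $\stree=\stree(\tree,\form)$ is built homomorphically to $\tree$, for every path $\pathn$ the node $\stree(\pathn)$ is exactly the $\form$-node associated to the node of $\tree$ reached by $\pathn$. Second, because trails are cycle-free, reachability is transported faithfully: if $\pathh{\tree(\pathn)}{\trail}{\node'}$ in $\tree$, then the residual (forward, up to cancellation) path $\pathn'$ from $\tree(\pathn)$ to $\node'$ satisfies $\pathn'\in\trail$ and $\stree(\pathn\pathn')$ is the $\form$-node associated to $\node'$, and distinct target nodes $\node'$ give distinct residual paths $\pathn'$. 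The exclusion of $\modals,\dual{\modals}$ and of a subtrail together with its converse under a Kleene star is what rules out the detours that would otherwise break this correspondence.

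First I would dispatch the case where $\form_s$ has no counting subformula, which also covers bare fixpoint formulas. Only the first rule of Figure~\ref{fig:countingentailment} applies, so it suffices to derive $\stree(\pathn)\vdash^\form\form_s$ in the local relation of Figure~\ref{fig:entailmentnode}. Let $\form_s'$ be $\form_s$ after unfolding its fixpoints not under a modality; then $\semfc{\form_s'}{\tree}{\emptyset}=\semfc{\form_s}{\tree}{\emptyset}$, every modal subformula of $\form_s'$ belongs to $\flcl{\form}$ hence to $\lean{\form}$, and therefore $\form_s'\induced\lean{\form}$ (as in Lemma~\ref{lem:formulas_induced}). Lemma~\ref{lem:complete_entailment} then turns $\tree(\pathn)\in\semfc{\form_s'}{\tree}{\emptyset}$ into $\stree(\pathn)\vdash^\form\form_s'$, and since no $\mu$-formula ever belongs to a $\form$-node the fixpoint rule of Figure~\ref{fig:entailmentnode} is invertible, so re-folding gives $\stree(\pathn)\vdash^\form\form_s$. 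The remaining structural cases $\form_s=\form_1\wedge\form_2$, $\form_1\vee\form_2$ and $\modalf{\modals}{\form'}$ (with a counting subformula somewhere inside) follow from the matching rules of Figure~\ref{fig:countingentailment} and the induction hypothesis; in the modality case one uses that $\tree(\pathn)\in\semfc{\modalf{\modals}{\form'}}{\tree}{\emptyset}$ forces $\brel(\tree(\pathn),\modals)$ to be defined, so $\pathn\modals$ is a legal path of $\stree$ (for a backward $\modals$ precisely because $\pathn$ then ends with the matching forward modality, $\tree$ being a tree) and $\stree(\pathn\modals)$ is the image of $\brel(\tree(\pathn),\modals)$.

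The core cases are $\form_s=\countf{c}{\trail}{\restrictedform}{>}{\natn}$ and $\form_s=\countf{c}{\trail}{\restrictedform}{\leq}{\natn}$, where $\restrictedform$ is counting-free by the syntactic restriction. Fix a node $\node'$ with $\pathh{\tree(\pathn)}{\trail}{\node'}$, residual path $\pathn'$, and $\node'\in\semfc{\restrictedform}{\tree}{\emptyset}\cap\semfc{c}{\tree}{\emptyset}$. Applying the counting-free argument above to $\restrictedform$ (that is, Lemmas~\ref{lem:formulas_induced} and~\ref{lem:complete_entailment}) yields $\stree(\pathn\pathn')\vdash^\form\restrictedform$; moreover $\stree(\pathn\pathn')$ contains $c$ since $\node'$ is annotated by $c$ in $\tree_c$; hence $\stree(\pathn\pathn')\vdash^\form\restrictedform\wedge c$. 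In the ``greater than'' case the hypothesis $\tree(\pathn)\in\semfc{\countf{c}{\trail}{\restrictedform}{>}{\natn}}{\tree}{\emptyset}$ provides more than $\natn$ such nodes $\node'$, and by the correspondence these give more than $\natn$ distinct witnesses of the premise of the ``greater than'' rule. In the ``less than'' case the cardinality premise follows the same way (using the converse direction of Lemma~\ref{lem:complete_entailment} to push any $\stree$-witness back to a $\tree$-node satisfying $\restrictedform\wedge c$); for the universal premise, every node $\trail$-reachable from $\tree(\pathn)$ either satisfies $\restrictedform$ and is annotated by $c$, or satisfies $\neg\restrictedform$ and is not annotated by $c$ — because in $\tree_c$ the proposition $c$ sits on exactly the nodes selected by this counting formula — and, $\restrictedform$ being counting-free, $\stree(\pathn\pathn')\vdash^\form(\restrictedform\wedge c)\vee(\neg\restrictedform\wedge\neg c)$ then follows as in the counting-free case.

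I expect the main obstacle to be the bookkeeping in the two counting cases: one must be sure that the map from $\trail$-reachable nodes of $\tree$ to the residual paths inspected by the rules of Figure~\ref{fig:countingentailment} is a bijection, so that distinct tree witnesses are neither conflated (which would wreck the ``greater than'' direction) nor spuriously multiplied (the ``less than'' direction) — this is exactly where cycle-freeness of $\trail$ and the use of the homomorphic Lean-labelled tree $\stree(\tree,\form)$ enter — and that the annotation of $\tree_c$ is coherent with the subformula/node pairs the induction visits, so that ``$\node'$ satisfies $c$'' really coincides with ``$\node'$ was selected by the counting formula under consideration at $\tree(\pathn)$''. Making this synchronization precise, so that in particular the universal premise of the ``less than'' rule is genuinely met, is the delicate point, and is best handled by letting the induction mirror the recursion that builds $\tree_c$.
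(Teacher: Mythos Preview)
Your proposal is correct and follows essentially the same approach as the paper: structural induction on $\form_s$, dispatching counting-free subformulas via unfolding and Lemma~\ref{lem:complete_entailment}, and handling the two counting cases by transferring witnesses through the homomorphism $\stree=\stree(\tree,\form)$. You are in fact more explicit than the paper about the re-folding step after Lemma~\ref{lem:complete_entailment} and about the bijection between $\trail$-reachable nodes in $\tree$ and residual paths inspected by the rules of Figure~\ref{fig:countingentailment}, both of which the paper leaves implicit.
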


\begin{proof}
  We proceed by induction on $\form_s$.
  
  If $\form_s$ does not contain any counting formula, we consider $\form_s^\prime$ which is $\form_s$ ofter unfolding its fixpoint formulas not under modalities. We have $\semfc{\form_s^\prime}{\tree}{\emptyset} = \semfc{\form_s}{\tree}{\emptyset}$ and $\form_s^\prime \induced \lean{\form}$. We conclude by Lemma \ref{lem:complete_entailment}.
  
  For most inductive cases, the proof is immediate by induction, as  the formula size decreases.
  
  For $\countf{c}{\trail}{\restrictedform}{>}{\natn}$, we have by induction form every counted node $\stree(\pathn\pathn^\prime) \vdash^\form \restrictedform$ and $\stree(\pathn\pathn^\prime) \vdash^\form c$. We conclude by the conjunction rule and by the counting rule of Figure \ref{fig:countingentailment}.
    
  For $\countf{c}{\trail}{\restrictedform}{\leq}{\natn}$, we proceed as above for the counted nodes. For the nodes that are not counted, have $\semfc{\neg\restrictedform}{\tree}{\valuation} = \semf{\neg\restrictedform}{\tree}{\valuation}$ and by soundness, we have $\stree(\pathn\pathn^\prime) \vdash^\form \neg\restrictedform$. We conclude by remarking that the node is not annotated by $c$, hence $\stree(\pathn\pathn^\prime) \vdash^\form \neg c$.
\end{proof}

We now need to show that the \psitree{\form} $\stree$ is actually built by the algorithm. The proof that it is the case follows closely the one from \cite{geneves-pldi07}, with a crucial exception: we need to make sure there are enough instances of each formula. Indeed, in \cite{geneves-pldi07}, the algorithm uses a $\form$type (a subset of $\lean{\form}$) at most once on each branch from the root to a leaf of the built tree. This yields a simple condition to stop the algorithm and conclude the formula is unsatisfiable. However, in the presence of counting formulas, a given $\form$type may occur more than once on a branch. To maintain the termination of the algorithm, we bound the number of identical $\form$type that may be needed by $K(\form)$ as defined in Figure \ref{fig:boundK}. We now check that this bound is sufficient to build a tree for any satisfiable formula.

We recall that $\form$ is a satisfiable formula and $\tree$ is a smallest tree such that $\form$ is satisfied, and $\node^\star$ is a witness of satisfiability.

We proceed in two steps: first we show that counted nodes (with counted propositions) imply a bound on the number of identical $\form$types on a branch for a smallest tree. Second, we show that this minimal marking is bound by $K(\form)$.

In the following, we call counted nodes and node $\node^\star$ \emph{annotations}.

We now define the projection of an annotation on a path. Let $\pathn$ be a path from the root of the tree to a leaf. An annotation projects on $\pathn$ at $\pathn_1$ if $\pathn = \pathn_1 \pathn_2$, the annotation is at $\pathn_1 \pathn_m$, and $\pathn_2$ shares no prefix with $\pathn_m$.

\begin{lemma}\label{lem:annotation_present}
  Let $\stree^\prime$ be the annotated tree, $\pathn$ a path from the root of the tree to a leaf, $\node_1$ and $\node_2$ two distinct nodes of $\pathn$ such that $\fnode{\form}_1 = \fnode{\form}_2$. Then either annotations projects both on $\pathn$ at $\node_1$ and $\node_2$, or an annotation projects strictly between $\node_1$ and $\node_2$.
\end{lemma}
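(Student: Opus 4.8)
The plan is to argue by contradiction from the minimality of $\tree$: assuming the conclusion fails, I build a strictly smaller tree that still satisfies $\form$ at $\node^\star$, which is impossible. Without loss of generality let $\node_1$ be an ancestor of $\node_2$ on $\pathn$, and suppose no annotation projects strictly between $\node_1$ and $\node_2$ and that for at least one of the two endpoints no annotation projects at it either. The only structural input I use about $\node_1,\node_2$ is $\fnode{\form}_1=\fnode{\form}_2$: by Lemma~\ref{lem:complete_entailment} the two $\form$-nodes entail exactly the same lean-induced formulas, and since (by the structural argument behind Lemmas~\ref{lem:complete_entailment}--\ref{lem:complete_consistent}) a node's $\form$-node is determined by its proposition, its children's $\form$-nodes, and its incoming modality, $\node_1$ and $\node_2$ are freely interchangeable as far as local consistency is concerned.

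\textbf{Contracting from above.} Suppose no annotation projects at $\node_1$. Form $\tree'$ by replacing the whole subtree rooted at $\node_1$ with the whole subtree rooted at $\node_2$. The deleted nodes are exactly $\node_1$, the nodes of $\pathn$ strictly between $\node_1$ and $\node_2$, and the subtrees they carry off $\pathn$; every one of these projects on $\pathn$ at $\node_1$ or strictly between, hence by assumption is not an annotation, so $\node^\star$ and all counted nodes survive. Because $\fnode{\form}_1=\fnode{\form}_2$ the single new edge (from $\node_1$'s old parent to the re-grafted node) is consistent by Lemma~\ref{lem:complete_consistent}, and propagating the $\form$-node equalities upward shows every surviving node keeps its $\form$-node, so the lean-labelled version of $\tree'$ is just $\stree$ with that subtree replaced. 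It then remains to transfer $\node^\star\in\semfc{\form}{\tree}{\emptyset}$ (Lemma~\ref{lem:annotated_satisfiability}) to $\tree'$: subformulas without counting are preserved because $\form$-nodes are; for $\countf{c}{\trail}{\restrictedform}{\leq}{\natn}$ the number of annotated witnesses can only shrink; for $\countf{c}{\trail}{\restrictedform}{>}{\natn}$ the chosen $\natn+1$ annotated witnesses all survive and still satisfy $\restrictedform$ (which has no counting subformula, so its extension is pinned down by $\form$-nodes) and remain reachable through $\trail$, the only affected edges being incident to the contracted region. By Lemma~\ref{comple1} and then Theorem~\ref{sound} applied to the lean-labelled $\tree'$ we get $\node^\star\in\semf{\form}{\tree'}{\emptyset}$ with $|\tree'|<|\tree|$, a contradiction.

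\textbf{Contracting from below.} Suppose instead no annotation projects at $\node_2$. Now I splice the segment out from the bottom: delete $\node_2$ with its off-$\pathn$ subtree and the strictly-between nodes with their off-$\pathn$ subtrees, and re-attach the subtree hanging below $\node_2$ along $\pathn$ directly under $\node_1$ in place of $\node_1$'s on-$\pathn$ child. Every deleted node projects at $\node_2$ or strictly between, so no annotation is lost (in particular $\node^\star$ survives), the re-attachment is legal because $\fnode{\form}_1=\fnode{\form}_2$, and the rest of the argument (survivors keep $\form$-nodes, counting obligations survive, apply Lemmas~\ref{comple1} and Theorem~\ref{sound}) is as above, again contradicting minimality. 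Conjoining the two cases: if no annotation projects strictly between, then an annotation projects at $\node_1$ \emph{and} an annotation projects at $\node_2$, which is the statement.

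The routine parts are the bookkeeping of which nodes vanish and the single local-consistency recheck. I expect two points to need care. First, in the bottom contraction one must route the re-attachment through the correct child slot when $\node_1$ and $\node_2$ sit on different child-edges of their parents; this is handled using that a $\form$-node contains exactly one of $\modalf{\invfc}{\true},\modalf{\invns}{\true}$ together with $\fnode{\form}_1=\fnode{\form}_2$. Second, and this is the technical heart, one must show the subtree move does not break a $\countf{c}{\trail}{\restrictedform}{>}{\natn}$ obligation exercised when evaluating $\form$ at $\node^\star$ — which is exactly the purpose of the annotation discipline: the witnesses are the annotated nodes, none of them lies in the contracted region, and their trail-reachability from the relevant anchors is preserved because only edges incident to that region change. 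This counting-preservation step is where I would spend the effort.
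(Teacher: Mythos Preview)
Your overall strategy matches the paper's: argue by contradiction from the minimality of $\tree$, contract the segment between $\node_1$ and $\node_2$, and show the smaller tree still satisfies $\form$. The paper organizes the contraction slightly differently (it keeps $\node_1$ and reassigns its children to be those of $\node_2$; when $\node_1$ carries the projected annotation it reassigns only the child in the direction of the first modality towards $\node_2$), but the skeleton is the same.

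The genuine gap is the counting-preservation step, which you correctly flag as the technical heart but then justify incorrectly. Saying ``only edges incident to the contracted region change'' does not establish that a counted witness remains reachable through $\trail$: if the anchor of a counting subformula lies on one side of the segment and the witness on the other, the original trail path \emph{traverses} the segment, and after contraction you must show the \emph{shortened} path is still an instance of $\trail$. A regular expression is not in general closed under excising an internal factor of a matching word, so this is the whole difficulty, not a bookkeeping detail. The paper's argument here is where essentially all the work lies: it exploits the restricted shape of trails (starred subtrails followed by a non-repeated tail $\trail_{n+1}$) and argues, using $\fnode{\form}_1=\fnode{\form}_2$ and the presence of the counting proposition $c$ in both nodes, that the crossing cannot occur inside the non-repeated tail (otherwise strictly larger non-fixpoint formulas mentioning $c$ would appear at $\node_1$ than at $\node_2$). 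Hence both $\node_1$ and $\node_2$ are reached while the match is still inside some $\trail_i^\star$, and the excised segment corresponds to whole iterations of stars, so the shortened path is again an instance of $\trail$. For the ``less than'' case one must additionally check that contraction creates no \emph{new} reachable nodes satisfying $\restrictedform$, which your sketch also omits. Without engaging with the trail structure, the step you label ``where I would spend the effort'' is not merely sketched but missing.

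A smaller point: in your bottom contraction you appeal to the backward modalities $\modalf{\invfc}\true,\modalf{\invns}\true$ in the shared $\form$-node to resolve the child-slot question. But those record the \emph{incoming} edge type, not the on-path \emph{outgoing} direction at $\node_1$ versus $\node_2$, so this does not settle which slot to re-attach into; the paper avoids the issue by choosing the reassigned direction to be the first modality from $\node_1$ towards $\node_2$.
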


\begin{proof}
  We proceed by contradiction: we assume there is no annotation that projects between $\node_1$ and $\node_2$ and at most one of them has an annotation that projects on it. Without loss of generality, we assume that $\node_2$ is below $\node_1$ in the tree.
  
  Assume neither $\node_1$ nor $\node_2$ is annotated (through projection). We show that the tree where $\brel(\node_1,\fc) \leftarrow \brel(n_2,\fc)$ and $\brel(\node_1,\ns) \leftarrow \brel(\node_2,\ns)$ still satisfies $\form$ at $\node$, a contradiction since this tree is strictly smaller. Let $\tree_s$ be this smaller tree, $\stree_s$ the corresponding \psitree{\form}, and for every path $\pathn$ of $\stree$, let $\pathn_s$ be the potentially shorter path if it exists (i.e., if it was not removed when pruning the tree). More precisely, let $\pathn_1$ be the path to $\node_1$ and $\pathn_1\pathn_2$ be the path to $\node_2$. If $\pathn^\prime = \pathn^\prime_1 \pathn^\prime_3$ where $\pathn^\prime_1$ is a prefix of $\pathn_1$ and the paths are disjoint from there, then $\stree_s(\pathn^\prime) = \stree(\pathn^\prime)$. If $\pathn^\prime = \pathn_1 \pathn_2 \pathn_3$, then $\stree_s(\pathn_1\pathn_3) = \stree(\pathn^\prime)$.
  
  First, as there was no annotation projected, $\node$ is still part of this tree at a path $\pathn_s$. We show that we have $\pathn_s \vdash^{\form}_{\stree_s} \form$ by induction on the derivation $\pathn \vdash^{\form}_{\stree} \form$. Let $\pathn^\prime \vdash^{\form}_{\stree} \form^\prime$ in the derivation, assuming that $\pathn^\prime_s$ is defined.
  
  The case where $\form^\prime$ does not mention any counting formula is trivial: $\stree(\pathn^\prime) = \stree_s(\pathn^\prime_s)$ thus local entailment is immediate.
  
  Conjunction and disjunction are also immediate by induction.
  
  For the modality case, we first need to prove an additional property. If $\pathn^\prime \vdash^{\form}_{\stree} \modalf{\modals} \form^\prime$ and $\form^\prime$ contains a counting formula, then $\pathn^\prime \modals$ is either a prefix of $\pathn_1$ followed by a disjoint path, or it includes $\pathn_1\pathn_2$. We prove this property by contradiction. The formula $\modalf{\modals} \form^\prime$ is both in $\stree(\pathn_1)$ and in $\stree(\pathn_1\pathn_2)$. We consider the outermost counting formula in $\form^\prime$ which we write $\form^\prime_c$. It presence implies the occurrence of a counting proposition $c$ in the formula. Since counting propositions are distinct for distinct syntactic occurrences of a formula, this implies that the corresponding counting proposition is either under a fixpoint (which is impossible), or under an enclosing counting formula, which is also impossible. We thus have a contradiction
  
  
We now turn to the counting case $\countf{c}{\trail}{\restrictedform}{\#}{\natn}$. We say that a path \emph{does not cross over} when this path does not contain $\node_1$ nor $\node_2$. For nodes that are reached using paths that do not cross over, we conclude by induction that they are also counted. We now show that the remaining nodes for which a crossover happened are also reached. Without loss of generality, assume that $\pathn^\prime$ is a prefix of $\pathn_1$ (the counting formula is in the ``top'' part of the tree), and let $\pathn_n$ be the path from the counting formula to the counted node ($\pathn_n$ is an instance of the trail $\trail$). This path is of the shape $\pathn^\prime_1 \pathn_2 \pathn_c$, with $\pathn_1 = \pathn^\prime \pathn^\prime_1$. We now show that the path $\pathn^\prime_1 \pathn_c$ is an instance of $\trail$ if and only if $\pathn_n$ is, thus the same node is still counted.

Recall that $\trail$ is of the shape $\trail_1, \ldots, \trail_n, \trail_{n+1}$ where $\trail_1$ to $\trail_n$ are of the form $\trail_r^\star$ and where $\trail_{n+1}$ does not contain a repeated trail. We say that a prefix $\pathn_p$ of a path $\pathn$ \emph{stops at $i$} if there is a suffix $\pathn_s$ such that $\pathn_p\pathn_s$ is still a prefix of $\pathn$, if $\pathn_p\pathn_s \in \trail_1, \ldots, \trail_i$, and if there is no shorter suffix $\pathn^\prime_s$ and $j$ such that $\pathn_p\pathn^\prime_s \in \trail_1, \ldots, \trail_j$. (Intuitively, $\trail_i$ is the trail being used when matching the end of $\pathn_p$.) Note that $i$ may not be unique as a path may be matched in different ways by a trail. We now show that there are $i \leq j \leq n$ such that both $\pathn^\prime_1$ stops at $i$ and $\pathn^\prime_1 \pathn_2$ stop at $j$. We thus show that $j$ cannot be $n+1$. Recall that $\trail_{n+1}$ does not contain a repeated subtrail. If $\fnode{\form}_2$ does not contain the counted proposition $c$ (which may happen in the case of a ``less than'' counting where the target is not counted), then neither does $\fnode{\form}_1$, which is a contradiction to the fact that $\trail{i},\ldots,\trail_{n+1}$ is not empty (in that case the counted proposition is necessarily mentioned). Thus $\fnode{\form}_2$ contain formulas without a fixpoint (as the trail is not repeated) mentioning $c$. Consider the largest such formula. By an induction on the path $\pathn_2$, we build a strictly larger formula that occurs in $\fnode{\form}_1$. This a contradiction to the hypothesis that $\fnode{\form}_1 = \fnode{\form}_2$.

We now consider the suffixes $\pathn_s^1$ and $\pathn_s^2$ computed when stating that the paths stop at $i$ and $j$. These suffixes correspond to the path matching the end of $\trail_i$ and $\trail_j$, respectively (before the next iteration or switching to the next formula). They have matching formulas in $\fnode{\form}_1$ and $\fnode{\form}_2$. As the formulas are present in both nodes, then the remainder of the paths ($\pathn_2\pathn_c$ and $\pathn_c$) are instances of $(\pathn_s^1 | \pathn_s^2) \trail_i \ldots \trail_{n+1}$, thus $\pathn^\prime_1 \pathn_c$ is an instance of $\trail$ if and only if $\pathn_n$ is.

In the case of ``greater than'' counting, we conclude immediately by induction as the same nodes are selected (thus there are enough). In the case of ``less than'', we need to check that no new node is counted in the smaller tree. Assume it is not the case for the formula $\countf{}{\trail}{\restrictedform}{\leq}{\natn}$, thus there is a path $\pathn_n \in \trail$ to a node satisfying $\restrictedform$. As the same node can be reached in $\stree$, and as we have $\stree(\pathn^\prime \pathn_n) \vdash^{\form} \neg \restrictedform$ by induction, we have a contradiction.

This concludes the proof when neither $\node_1$ nor $\node_2$ is annotated. The proof is identical when $\node_2$ is annotated. If $\node_1$ is annotated, we look at the first modality between $\node_1$ and $\node_2$. If it is a $\fc$, then we build the smaller tree by doing $\brel(\node_1,\fc) \leftarrow \brel(n_2,\fc)$ (we remove the $\ns$ subtree from $\node_2$ instead of $\node_1$). Symmetrically, if the first modality is a $\ns$, we consider  $\brel(\node_1,\ns) \leftarrow \brel(\node_2,\ns)$ as smaller tree. The rest of the proof proceeds as above.
\end{proof}

\begin{theorem}[Completeness]\label{thm:completeness}
  If $\form$ is satisfiable, then a satisfying tree is built.
\end{theorem}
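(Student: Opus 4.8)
The plan is to work with the Lean-labeled tree $\stree = \stree(\tree,\form)$ already constructed above from a \emph{smallest} satisfying tree $\tree$ with satisfiability witness $\node^\star$, and to establish four facts: (i) $\stree$, viewed as a \psitree{\form}, satisfies $\form$; (ii) $\stree$ is consistent in the sense tested by the algorithm, and its empty subtrees and pending backward modalities match the side conditions of $AUX$; (iii) every subtree $(\node,\stree_1,\stree_2)$ of $\stree$ satisfies $\nmax(\node,\stree_1,\stree_2) \le K(\form)+2$; and then (iv) the least-fixpoint loop of Algorithm~\ref{satalgo} adds every subtree of $\stree$ to $ST$, so it terminates returning $1$.

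For (i), let $\pathn^\star$ be the path of $\stree$ corresponding to $\node^\star$. By Lemma~\ref{lem:annotated_satisfiability} we have $\node^\star \in \semfc{\form}{\tree}{\emptyset}$, so Lemma~\ref{comple1} instantiated with $\form_s = \form$ gives $\pathn^\star \vdash^\form_\stree \form$; and since the root of $\tree$ has no parent, neither $\modalf{\invfc}\top$ nor $\modalf{\invns}\top$ lies in its $\form$-node, so $\stree \vdash \form$. For (ii), local consistency $\fbrel{\form}{\fnode{\form}_1}{\modals}{\fnode{\form}_2}$ along every edge $\brel(\node_1,\modals) = \node_2$ with $\modals \in \{\fc,\ns\}$ is exactly Lemma~\ref{lem:complete_consistent}; moreover a node of $\stree$ has an empty $i$-subtree precisely when $\brel(\node,i)$ is undefined in $\tree$, which forces $\modalf{i}\top \notin \fnode{\form}$, and any non-empty $i$-subtree has $\modalf{\dual{i}}\top$ in its root because the corresponding backward edge of $\tree$ is defined.

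For (iii), I would count the annotations (counted nodes, together with $\node^\star$): each occurrence of $\countf{c}{\trail}{\restrictedform}{>}{\natn}$ marks exactly $\natn+1$ nodes, each occurrence of $\countf{c}{\trail}{\restrictedform}{\leq}{\natn}$ marks at most $\natn$, no other construct marks anything (counting never occurs under a fixpoint nor under counting), and $\node^\star$ is one more; hence the total is at most $K(\form)+1$ by the definition of $K$ in Figure~\ref{fig:boundK}. Fixing a branch of $\stree$ and a maximal chain $\node_1 \prec \cdots \prec \node_m$ of its nodes all carrying the same $\form$-node, Lemma~\ref{lem:annotation_present} (whose shrinkable alternative is ruled out by the minimality of $\tree$) gives, for each consecutive pair, either an annotation projecting strictly between them or annotations projecting onto both. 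Since each annotation projects onto at most one point of a fixed branch, the annotations charged this way are pairwise distinct and there are at least $m-1$ of them, so $m-1 \le K(\form)+1$, i.e. $m \le K(\form)+2$. Thus $\stree$ already meets the algorithm's occurrence bound and no pruning is needed.

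For (iv), an induction on height shows every subtree of $\stree$ enters $ST$: its leaves are admissible at the first iteration (their $\form$-nodes contain no $\modalf{\fc}\top$ or $\modalf{\ns}\top$), and the inductive step combines the consistency and empty/backward-modality conditions from (ii) with the bound from (iii) to conclude that once both child subtrees are in $ST$, the parent subtree lies in $AUX$, hence in $ST$ at the next iteration. Consequently, while $\stree \notin ST$ there is a new tree in $AUX \setminus ST$, so the algorithm cannot return $0$ prematurely; after finitely many iterations $\stree \in ST$, at which point $ST \vdash \form$ by (i) and the algorithm returns $1$. The main obstacle is step (iii): making the annotation-counting combinatorics of Lemma~\ref{lem:annotation_present} land on exactly the constant $K(\form)+2$ used by the algorithm — verifying that the annotations charged to a repeated $\form$-node along a branch are genuinely distinct and that their count is governed by $K(\form)$. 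The rest is bookkeeping that transports the already-established semantic facts (Lemmas~\ref{lem:annotated_satisfiability}, \ref{comple1}, \ref{lem:complete_consistent}) through the acceptance conditions of Algorithm~\ref{satalgo}.
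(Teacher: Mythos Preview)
Your proposal is correct and follows essentially the same approach as the paper: the paper's proof is deliberately terse (deferring the non-counting skeleton to \cite{geneves-pldi07}) and focuses only on the occurrence bound, which it derives exactly as you do in step~(iii) by combining Lemma~\ref{lem:annotation_present} with the fact that the number of counted nodes is at most $K(\form)$ and hence the number of marks is at most $K(\form)+1$. Your steps (i), (ii), and (iv) simply make explicit the bookkeeping the paper leaves implicit, and your charging argument in (iii)---assigning to each consecutive pair an annotation whose projection lies in the half-open interval $[n_i,n_{i+1})$---is a correct way to extract the bound $m\le K(\form)+2$ from the lemma.
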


\begin{proof}
  The proof proceeds as in \cite{geneves-pldi07}, we only need to check there are enough copies of each node to build every path. Let $\pathn$ be a path from the root of the tree to the leaves. By Lemma~\ref{lem:annotation_present}, there are at most $n+1$ identical nodes in this path, where $n$ is the number of marks. The number of marks is $c+1$ where $c$ is the number of counted nodes. We show by an immediate induction on the formula $\form$ that $c$ is bound by $K(\form)$ as defined in Figure~\ref{fig:boundK}. We conclude by remarking that $K(\form) + 2$ is the number of identical nodes we allow in the algorithm.
\end{proof}

\subsection{Complexity} \label{complexity}

We now show that the complexity of the satisfiability algorithm is exponential time w.r.t. the formula size. This is achieved in two steps: we first show that the Lean size is linear w.r.t. the formula size,
then we show that the algorithm has a single exponential complexity w.r.t. to the Lean size.

\begin{lemma}
The Lean size is linear in terms  of the original formula size.
\end{lemma}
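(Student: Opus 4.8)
The plan is to exhibit an explicit injection from $\lean{\form}$ into a set whose size is controlled by the syntactic components of $\form$, and to track the size along the two transformations involved (annotation with counting propositions, and the $nav$/Fisher–Ladner construction). First I would recall the definition $\lean{\form}=\{\modalf{\modals}{\top}\}\cup\{\modalf{\modals}{\form'}\in\flcl{\form}\}\cup\Props_\form\cup C\cup\{p_{\overline\form}\}$. The four modalities contribute a constant; $\Props_\form$ is bounded by the number of proposition occurrences in $\form$, hence linear; $C$ is the set of counting propositions, one per counting subformula of $\form$, hence linear; and $p_{\overline\form}$ is one extra proposition. So the whole burden is to show that $|\{\modalf{\modals}{\form'}\in\flcl{\form}\}|$ is linear in $|\form|$.

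The key step is to bound the Fisher–Ladner closure. I would argue that every modal formula $\modalf{\modals}{\form'}$ appearing in $\flcl{\form}$ arises as a subformula (up to fixpoint unfolding, which by the stated convention happens "only once") either of $\form$ itself or of $nav(\countf{c}{\trail}{\restrictedform}{\#}{\natn})$ for some counting subformula of $\form$. The standard $\mu$-calculus fact is that $|\flcl{\form}|$ is linear in the number of distinct subformulas, because the only rule that could blow things up, $\fishrel{\ufixpf{\var}{\form}}{\form[\subst{\ufixpf{\var}{\form}}{\var}]}$, substitutes $\ufixpf{\var}{\form}$ for $\var$, and one shows by induction that every element of the closure is a subformula of $\form$ with some of its fixpoint variables replaced by the corresponding (closed) fixpoint subformulas — a set of size $O(|\form|)$. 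I would then handle the counting rule $\fishrel{\countf{c}{\trail}{\restrictedform}{\#}{\natn}}{nav(\countf{c}{\trail}{\restrictedform}{\#}{\natn})}$ by checking that $nav$ applied to a counting formula produces a formula whose size is $O(|\trail|+|\restrictedform|)$: inspecting the equations for $nav$ and $\nav{\trail}{\cdot}$, each trail constructor ($\modals$, concatenation, union, $\star$) adds a bounded amount (the $\star$ case introduces one fixpoint binder and one recursion variable), and crucially the number $\natn$ does \emph{not} appear in $nav(\countf{c}{\trail}{\restrictedform}{\#}{\natn})$ at all, so there is no dependence on the magnitude of the constants. Summing over all counting subformulas, whose total trail-and-subformula size is bounded by $|\form|$, gives a linear bound on the contribution of navigating formulas to the closure.

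Putting these together: $|\lean{\form}| \le 4 + |\Props_\form| + |C| + 1 + |\{\modalf{\modals}{\form'}\in\flcl{\form}\}| = O(|\form|)$. The main obstacle I anticipate is the bookkeeping around fixpoint unfolding — being careful that "fixpoints are only expanded once" (as the excerpt notes after the definition of $\flcl{\form}$) genuinely keeps the closure linear rather than merely finite, and that the $nav$ translation of trails, which itself introduces fixpoints for the $\trail^\star$ case, does not interact with this unfolding to cause a blow-up; a clean way to see this is that those introduced fixpoints are closed and their unfolding only duplicates material already of size $O(|\trail|)$. A secondary subtlety worth one sentence is the annotation step that replaces each $\countf{}{\trail}{\restrictedform}{\#}{\natn}$ by $\countf{c}{\trail}{\restrictedform}{\#}{\natn}$ with a fresh $c$: this is evidently size-preserving up to the additive $|C|$ term already accounted for.
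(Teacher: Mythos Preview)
Your approach is essentially the paper's, carried out with considerably more detail: the paper's proof is only a sketch that cites \cite{geneves-pldi07} for the non-counting fragment and then points to two specific duplication sources introduced by counting --- the $\neg\restrictedform$ appearing in $nav(\countf{c}{\trail}{\restrictedform}{\leq}{\natn}) = \nav{\trail}{(\restrictedform\land c)\lor(\neg\restrictedform\land\neg c)}$, and the payload duplication in $\nav{\trail_1\mid\trail_2}{\restrictedform} = \nav{\trail_1}{\restrictedform}\vee\nav{\trail_2}{\restrictedform}$ --- observing that each is only a constant-factor blow-up.

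One caution on your phrasing: your claim that ``$nav$ applied to a counting formula produces a formula whose size is $O(|\trail|+|\restrictedform|)$'' is false if ``size'' means syntactic tree size, precisely because of the union case the paper singles out (iterated $\mid$ in a trail duplicates the payload and can make the \emph{tree} exponentially large). What \emph{is} linear is the number of \emph{distinct} subformulas of the $nav$ output, and since the Fisher--Ladner closure is a set of formulas this is exactly the quantity you need; your argument goes through once you make that distinction explicit.
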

\begin{proof}[Proof Sketch]
It was shown in \cite{geneves-pldi07} that the Lean size of non counting formulas is linear with respect to the formula size.

We now describe the case for counting formulas.
Note that each counting formula introduces only one new counting proposition in the Lean. 
A first duplication of formulas is considered in the construction of the Lean for  "less than" counting formulas. Both, the formula witnessing the counted nodes and its negation are considered. 
Furthermore, another duplication  is introduced for counting formulas of the form $\countf{}{\trail_1|\trail_2}{\form}{\#}{k}$.
Each of these duplications only doubles the size of the Lean. Hence, the Lean size remains linear w.r.t to the original formula size.
\end{proof}

\begin{theorem}\label{complexitythm}

The satisfiability algorithm for the logic is decidable in time $2^{O(n)}$, where $n$ is the Lean size. 

\end{theorem}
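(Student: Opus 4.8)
The plan is a direct complexity analysis of Algorithm~\ref{satalgo} in terms of $n = |\lean{\form}|$ (combined with the preceding lemma, it then also yields a $2^{O(|\form|)}$ bound in the formula size). The analysis follows the classical pattern for fixpoint tableaux. (i) The search space of $\form$-nodes is single-exponential: a $\form$-node is a subset of $\lean{\form}$, so $|\fNodes{\form}| \le 2^n$. (ii) The set $ST$ stabilises within $2^{O(n)}$ iterations of the repeat loop: any iteration that does not terminate strictly enlarges $ST$, and --- once trees are identified by the \emph{summary} described below --- $ST$ ranges over a set of size $2^{O(n)}$. (iii) Each iteration --- building $AUX$ and testing $ST \vdash \form$ --- costs $2^{O(n)}$. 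Multiplying (ii) and (iii) gives $2^{O(n)}$, which is the claim.

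The heart of the argument is (ii). The trees stored in $ST$ need not be of polynomial size, so $|ST|$ cannot be bounded by naively enumerating small trees. Instead, observe that everything the algorithm does with a stored tree $\stree_i$ --- the consistency test $\fbrel{\form}{\node}{i}{\stroot{\stree_i}}$, the test for pending backward modalities in $\stroot{\stree_i}$, the $\nmax$ test, and ultimately the test $ST \vdash \form$ of Figure~\ref{fig:countingentailment} --- depends only on a \emph{summary} of $\stree_i$: its root $\form$-node together with, for each counting proposition $c$ attached to a subformula $\countf{c}{\trail}{\restrictedform}{\#}{k}$, the number of nodes of $\stree_i$ carrying $c$ that are reachable along $\trail$, \emph{capped at $k+1$}. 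Two trees with the same summary are interchangeable, so it suffices to count summaries. The root $\form$-node costs $\le n$ bits; the capped count for $c$ costs $\lceil\log_2(k+2)\rceil$ bits, and $\sum_c \lceil\log_2(k_c+2)\rceil = O(n)$ because each constant $k_c$ is written in $\form$ and hence contributes $\Theta(\log k_c)$ to $|\form|$ and thus to $n$. So a summary is described by $O(n)$ bits, there are $2^{O(n)}$ of them, and $ST$ stabilises within $2^{O(n)}$ iterations.

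For (iii): in one iteration, $AUX$ is obtained by ranging over all ordered pairs of already-present summaries (or $\emptyset$) --- at most $(2^{O(n)}+1)^2 = 2^{O(n)}$ of them --- and over the $\le 2^n$ candidate root $\form$-nodes; for each, the consistency relation $\fbrel{\form}{\node}{i}{\cdot}$ quantifies over the $\le n$ modal formulas $\modalf{\modals}\psi \in \lean{\form}$ and calls the local entailment relation of Figure~\ref{fig:entailmentnode}, which is decidable in time polynomial in $n$ because each fixpoint of the Fisher--Ladner closure is unfolded exactly once; the updated $\nmax$ value and the new capped counts are computed from the children's summaries in polynomial time; and $ST \vdash \form$, restated in terms of summaries, is likewise $2^{O(n)}$. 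Hence each iteration costs $2^{O(n)}$, and the whole algorithm runs in $2^{O(n)} \cdot 2^{O(n)} = 2^{O(n)}$; by the Lean-size lemma this is also $2^{O(|\form|)}$.

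The main obstacle is to justify the summary representation used in (ii): that the algorithm genuinely never needs more than the capped, trail-restricted counts, and that these counts compose bottom-up. The first point is exactly what Lemma~\ref{lem:annotated_satisfiability} and the marking bound of the completeness argument (Lemma~\ref{lem:annotation_present}) provide --- capping the number of counted witnesses at the relevant constant preserves satisfiability and keeps each root-to-leaf branch within $K(\form)+2$ copies of any $\form$-node. The second point --- that the trail-restricted count at a parent is determined by those at its two children --- relies on the restricted shape $\trail_r^\star,\dots,\trail_r^\star,\trail_{n+1}$ of trails (so that the region of the tree reachable along a trail from a node decomposes cleanly along $\fc$ and $\ns$) and on the restriction that counting formulas neither nest nor occur under fixpoints.
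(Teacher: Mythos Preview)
Your route is genuinely different from the paper's. The paper's proof sketch does not quotient trees by any summary; it bounds the number of iterations directly by the maximal root-to-leaf depth permitted by the $\nmax$ constraint: each of the at most $2^n$ $\form$-node types can repeat at most $K(\form)+2$ times on a branch, so the depth, hence the number of iterations, is at most $2^n\cdot(K(\form)+2)\le 2^n\cdot k\cdot m$ where $k$ is the largest constant and $m$ the number of counting subformulas. The per-iteration cost is then estimated by direct traversal arguments ($\nmax$ is one tree traversal, $R^\form$ is local hence linear, the number of triples is bounded by $3\cdot(2^n\cdot k\cdot m)$, and the global entailment is claimed to be at most four nested exponential passes). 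There is no abstraction of trees to finite summaries anywhere in the argument.

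Your summary idea is more principled, but as written there is a gap. The algorithm stores explicit $\form$-trees in $ST$, so two distinct trees with the same summary remain two distinct elements, and the loop does not terminate merely because all summaries have appeared; what you are really analysing is a \emph{modified} algorithm, whose soundness and completeness you would then have to re-establish. More concretely, two of the four operations you list do not depend only on your summary. The $\nmax$ test needs the maximum multiplicity of the \emph{new root} $\node$ along every root-to-leaf path of $\stree_i$; a root $\form$-node plus capped $c$-counts does not carry this. More importantly, $ST\vdash\form$ asks whether $\pathn\vdash^{\form}_{\stree}\form$ at some \emph{internal} $\pathn$; when a counting subformula $\countf{c}{\trail}{\restrictedform}{\#}{k}$ is reached, the count is taken along $\trail$ from $\stree(\pathn)$, and trails such as $(\invfc|\invns)^\star,(\fc|\ns)^\star$ reach above $\pathn$ and into sibling subtrees. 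A count anchored at the root of a stored subtree does not determine the count anchored at an arbitrary internal node of the final tree, so ``trail-restricted counts compose bottom-up'' is not justified by the trail shape alone. To make the summary route go through you would need a richer state recording, for each counting subformula, the information needed at its actual anchor, and then argue that the number of such states is still $2^{O(n)}$.
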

\begin{proof}[Proof Sketch]
The cardinality of nodes set  is $2^n$. The number of occurrences of each node in the tree is bounded by $K(\form)\leq k*m$, where
$k$ is the greatest constant occurring in the counting formulas and $m$ is the number of counting subformulas. Hence the number of steps in the algorithm is bounded by $2^n*k*m$.

As for the functions at each step, $\nmax$ is a single traversal to the tree.
Since the entailment relation involved in the definition of $R^\form$ is only local, $R^\form$ is performed in linear time.

The number of choices to form trees (triples) at each step is restricted by $3*(2^n*k*m)$.

The global entailment relation involves four exponential time traversals: the number of trees, the number of nodes at each tree,
the number of traversals for the entailment relation of counting formulas, and the cost of each of such traversals. 
Hence it takes no more than $4*(2^n*k)$ time.


\end{proof}

Theorem~\ref{complexitythm} states the complexity for the logic defined in Figure~\ref{normalformsyn}. We now state that the same complexity upper-bound  holds if we additionally consider counting formulas of the form
$\modalf{\fc}{\countf{}{\ns^*}{\form}{\#}{k}}$ in the scope of a fixpoint operator (as presented in Section~\ref{sec:gradedpaths}).

\begin{lemma} \label{countinfixp}
Given a formula $\form$ where counting subformulas $\psi$ only count children nodes,
if every counting subformula $\psi$ is replaced by the equivalent fixpoint formula $\uc(\psi)$ in $\form$, $\form[\subst{\uc(\psi)}{\psi}]$,
then $Lean(\form[\subst{\uc(\psi)}{\psi}])\leq Lean(\form)*k^l$, where $k$ is greatest numerical constraint of the counting subformulas,
and $l$ is the greatest level nesting of counting subformulas.
\end{lemma}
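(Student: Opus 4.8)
The plan is to argue by induction on the nesting level $l$ of counting subformulas, translating one layer of counting operators at a time and showing that each layer multiplies the Lean size by at most a factor $k$. The basic observation is that (the recursive version of) $\uc$ is the identity on counting-free formulas and commutes with $\vee$, $\wedge$, modalities and fixpoints; hence $\form[\subst{\uc(\psi)}{\psi}]$ differs from $\form$ only at its (forest of) counting subformulas, and we may translate the innermost counting subformulas first, obtaining an intermediate formula $\form_1$ all of whose counting subformulas now have nesting at most $l-1$, and then recurse.

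The crux is the single-layer estimate. Fix an innermost counting subformula $\psi=\modalf{\fc}{\countf{}{\ns^\star}{\chi}{\#}{\natn_\psi}}$ with $\natn_\psi\le k$ and $\chi$ counting-free, and let $g=\uc(\psi)$. By definition $g$ is a chain $g_0,\dots,g_{\natn_\psi}$ with $g_0=\modalf{\fc}{\ufixpf{x}{\chi\vee\modalf{\ns}{x}}}$ and $g_{j+1}=\modalf{\fc}{\ufixpf{x}{(\chi\wedge g_j)\vee\modalf{\ns}{x}}}$ (followed by a single negation in the ``$\leq$'' case, which does not change the Fisher--Ladner closure). Writing $h_j$ for the fixpoint body of $g_j$, a direct inspection of $\flcl{g}$ shows that the only modal formulas it contributes to the Lean, beyond the mandatory $\modalf{\modals}{\top}$, are the $2(\natn_\psi+1)$ formulas $\modalf{\fc}{h_j}$ and $\modalf{\ns}{h_j}$ for $j\le\natn_\psi$, together with the modal formulas in $\flcl{\chi}$; and the latter are counted only \emph{once}, since the $\natn_\psi+1$ occurrences of $\chi$ inside $g$ are syntactically identical and hence identified in the Fisher--Ladner closure. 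In the original formula, $\psi$ already contributed exactly those modal subformulas of $\chi$ (through $nav(\psi)=\modalf{\fc}{\ufixpf{x}{(\chi\wedge c)\vee\modalf{\ns}{x}}}$, $nav$ being the identity on the counting-free $\chi$), plus its own fixpoint modality and a counting proposition $c$. So replacing $\psi$ by $g$ deletes $O(1)$ Lean elements and inserts $O(\natn_\psi)\le O(k)$; performing this for every innermost counting subformula therefore multiplies $\card{\lean{\form}}$ by at most $k$, once the additive and off-by-one constants are absorbed into the Lean elements the original already carries for each counting subformula (its navigating-formula modalities and its counting proposition).

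The induction then closes immediately: for $l=0$ there is nothing to translate; for $l\ge1$, translating the innermost layer gives $\card{\lean{\form_1}}\le k\cdot\card{\lean{\form}}$ with $\form_1$ of counting-nesting $l-1$, and the induction hypothesis on $\form_1$ yields $\card{\lean{\form[\subst{\uc(\psi)}{\psi}]}}=\card{\lean{\form_1[\subst{\uc(\psi)}{\psi}]}}\le k^{\,l-1}\cdot\card{\lean{\form_1}}\le k^{\,l}\cdot\card{\lean{\form}}$, using that translating innermost-first and then recursing produces exactly $\form[\subst{\uc(\psi)}{\psi}]$. The step I expect to be the main obstacle is the single-layer estimate, and specifically the sharing claim: one must check that after the recursive $\uc$ the $\natn_\psi+1$ copies of the body $\chi$ really are syntactically identical (no capture of fixpoint variables; identification up to renaming of bound variables), so that their Lean contribution collapses and the growth is linear rather than exponential in $\natn_\psi$. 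The remaining verifications --- that the ``$\leq$'' clause wraps the translation in a negation without changing $\flcl{\cdot}$, and that the forest of nested counting subformulas can be peeled layer by layer --- are routine.
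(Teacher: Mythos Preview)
Your proposal is correct and matches the paper's (one-line) proof sketch: both argue by a double induction, the inner one on the numerical bound $\natn$ showing that a single $\uc$-translation contributes only $O(\natn)$ new modal subformulas to the Lean thanks to the syntactic identification of the repeated copies of $\chi$ in the Fisher--Ladner closure. The paper organizes the outer induction as structural induction on $\form$ whereas you peel off nesting levels from the inside out; these are equivalent decompositions, and your explicit single-layer estimate (including the sharing argument you flag as the crux) is exactly the content the paper's sketch leaves implicit.
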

\begin{proof}[Proof Sketch]
It is proven by induction on the structure of $\form$, and in the case of counting formulas,  another induction is done on the numerical constraint.
\end{proof}

\begin{corollary}
The logic supporting counting formulas only on children in the scope of fixpoint formulas or another counting formula is decidable in $2^{O(n*k^l)}$, where
$k$ is the greatest cardinality constraint and $l$ is the greatest nesting level of counting formulas.
\end{corollary}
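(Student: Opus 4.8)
The plan is to reduce satisfiability in this extended logic to satisfiability in the core logic of Figure~\ref{normalformsyn}, by eliminating the children-counting subformulas through the translation $\uc(\cdot)$ of Section~\ref{sec:gradedpaths}, and then to invoke Theorem~\ref{complexitythm}, keeping track of the Lean blow-up through Lemma~\ref{countinfixp}.

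Concretely, let $\form$ be a formula of the extended logic, write $n$ for its size (which is, up to a constant factor, $\lean{\form}$), and let $\psi_1,\dots,\psi_j$ be its children-counting subformulas, i.e.\ those of the form $\modalf{\fc}{\countf{}{\ns^\star}{\form_0}{\#}{k_0}}$ occurring in the scope of a fixpoint operator or of another counting operator. Let $\form^\prime = \form[\subst{\uc(\psi_i)}{\psi_i}]$, applying the translation innermost-first so that nested occurrences are handled by the recursive version of $\uc$. First I would check that $\form^\prime$ is a well-formed formula of the core logic: each $\uc(\psi_i)$ is a pure modal/fixpoint formula containing no counting operator, and since the only counting operators permitted under a fixpoint or under another counting operator in $\form$ were exactly the $\psi_i$, after the substitution no counting operator remains in such a scope; one also checks that cycle-freeness is preserved, the fixpoints introduced by $\uc$ navigating only along $\fc$ and $\ns$ and never along a converse. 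Then, by the equivalence $\psi \equiv \uc(\psi)$ established in Section~\ref{sec:gradedpaths} (together with its recursive version for nested counting), $\semf{\form}{\tree}{\emptyset} = \semf{\form^\prime}{\tree}{\emptyset}$ for every tree $\tree$, so $\form$ is satisfiable if and only if $\form^\prime$ is.

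It then remains to bound the cost of deciding satisfiability of $\form^\prime$ with the core algorithm. By Lemma~\ref{countinfixp}, $\lean{\form^\prime} \leq \lean{\form}\cdot k^l$, where $k$ is the greatest numerical constant and $l$ the greatest nesting level of counting subformulas; combining this with the earlier lemma that the Lean is linear in the formula size gives $\lean{\form^\prime} = O(n\cdot k^l)$. Applying Theorem~\ref{complexitythm} to $\form^\prime$ then yields a decision procedure running in time $2^{O(\lean{\form^\prime})} = 2^{O(n\cdot k^l)}$, which is the claimed bound.

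The main obstacle is not the complexity arithmetic, which is a routine composition of the two quoted results, but justifying the reduction step carefully: that the recursive use of $\uc$ terminates (each application removes a counting operator, and $\uc$ on a children-counting formula recurses only on a strictly smaller numerical constant), that it produces a genuine core-logic formula respecting the syntactic restrictions (no counting under a fixpoint or a counting operator, cycle-freeness), and that the semantic-preservation argument of Section~\ref{sec:gradedpaths} indeed extends to the nested case. Once these points are settled, the corollary follows immediately from Lemma~\ref{countinfixp} and Theorem~\ref{complexitythm}.
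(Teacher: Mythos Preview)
Your proposal is correct and follows essentially the same approach as the paper: the paper's proof is the single line ``Immediate from Theorem~\ref{complexitythm} and Lemma~\ref{countinfixp}'', and your argument is exactly the natural unpacking of that sentence---translate away the children-counting subformulas via $\uc(\cdot)$, invoke Lemma~\ref{countinfixp} to bound the resulting Lean size by $O(n\cdot k^l)$, and then apply Theorem~\ref{complexitythm}. The additional care you take (well-formedness of $\form'$, cycle-freeness, termination of the recursive $\uc$) is appropriate due diligence that the paper leaves implicit.
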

\begin{proof}
Immediate from Theorem~\ref{complexitythm} and Lemma~\ref{countinfixp}.
\end{proof}



\section{Application to XML Trees} \label{sec:application} 

\subsection{XPath Expressions}\label{typespaths} \label{sec:xpath}

XPath \cite{xpath} was introduced as part of the W3C XSLT transformation language to have a non-XML format for selecting nodes and computing values from an XML
document (see \cite{geneves-pldi07} for a formal presentation of XPath). Since then XPath has become part of
several other standards, in particular it forms the ``navigation subset'' of the XQuery language.

In their simplest form XPath expressions look like ``directory navigation paths''.  For example, the XPath
\begin{verbatim}
  /company/personnel/employee
\end{verbatim}
navigates from the root of a document through the top-level ``company'' node to its
``personnel'' child nodes and on to its ``employee'' child nodes.  The result of
the evaluation of the entire expression is the set of all the ``employee''
nodes that can be reached in this manner.  At each step in the navigation, the selected nodes for that step can be
filtered with a test. Of special interest to us are the predicates that test node's count or the selected node's position in the previous step's selection.  For example, if we ask for
\begin{verbatim}
  /company/personnel/employee[position()=2]
\end{verbatim}
then the result is \emph{all} employee nodes that are the \emph{second} employee node among
the employee child nodes of each personnel node selected by the previous step.

XPath also makes it possible to combine the capability of searching along
``axes'' other than the shown ``children of'' with counting constraints. 
For example, if we ask for
\begin{verbatim}
/company[count(descendant::employee<=300)]/name
\end{verbatim}
then the result consists of the company names with less than 300 employees in total (the axis ``descendant'' is the transitive closure of the default -- and often omitted -- axis ``child''). 

The syntax and semantics of Core XPath expressions are respectively given on Figure~\ref{fig:xpath-syntax} and Figure~\ref{fig:xpath-semantics}.
An XPath expression is interpreted as a relation between nodes. The considered XPath fragment allows absolute and relative paths, path union, intersection, composition, as well as node tests and qualifiers with counting operators, conjunction, disjunction, negation, and path navigation. Furthermore, it supports all XPath axes allowing multidirectional navigation.

\begin{figure}[h]
\begin{align*}
   \Axis      \syntaxdefinition &  \text{self} \mid \text{child} \mid \text{parent} \mid \text{descendant} \mid \text{ancestor} \mid\\ 
   &   \text{following-sibling}    \mid \text{preceding-sibling} \mid \\ 
   & \text{following} \mid \text{preceding} \\
  \NameTest  \syntaxdefinition &  \QName \mid * \\
  \Step  \syntaxdefinition & \Axis\text{::}\NameTest \\
   \PathExpr  \syntaxdefinition &  \PathExpr/\PathExpr
\mid \PathExpr[\Qualifier] \mid \Step \\
 \Qualifier  \syntaxdefinition & \PathExpr \mid \CountExpr \mid\qualifnot \Qualifier \mid \\
  & \Qualifier \qualifand \Qualifier  \mid \Qualifier \qualifor \Qualifier \mid \nominal \\
  \CountExpr \syntaxdefinition & \qualifcount{\PathExpr'}~\Comparison~k \\ 
 \PathExpr'  \syntaxdefinition &  \PathExpr'/\PathExpr'
\mid \PathExpr'[\Qualifier'] \mid \Step \\
\Qualifier'  \syntaxdefinition & \PathExpr' \mid\qualifnot \Qualifier' \mid \Qualifier' \qualifand \Qualifier' \\ 
& \mid \Qualifier' \qualifor \Qualifier' \mid \nominal \\
 \Comparison  \syntaxdefinition & \leq  \mid > \mid \geq   \mid < \mid =\\
  \XPath \syntaxdefinition & \PathExpr \mid /\PathExpr  \mid \XPath \pathunion \PathExpr \mid\\
   & \XPath \pathintersect \PathExpr  \mid \XPath \pathexcept \PathExpr 
   \end{align*}
   \caption{Syntax of Core XPath Expressions.}\label{fig:xpath-syntax}
\end{figure}

\begin{figure}
\begin{align*}
   \pathsem{ \Axis\text{::}\NameTest }      =& \{ (x,y) \in \dom^2 \mid x(\Axis)y \text{ and }  \\ 
                                                            & y \text{ satisfies } \NameTest  \} \\
   \pathsem{ /\PathExpr }    =&  \{ (r,y) \in \pathsem{\PathExpr } \mid \\ & r \text{ is the root}   \} \\
   \pathsem{ P_1/P_2}    =& \pathsem{P_1} \circ \pathsem{P_2}\\
   \pathsem{P_1 \pathunion P_2}    =&  \pathsem{P_1} \cup \pathsem{P_2}\\
   \pathsem{ P_1 \pathintersect P_2 }    =& \pathsem{P_1} \cap \pathsem{P_2}\\
   \pathsem{ P_1 \pathexcept P_2 }    =& \pathsem{P_1} \setminus \pathsem{P_2}\\
   \pathsem{ \PathExpr[\Qualifier] }    = & \{ (x,y) \in \pathsem{\PathExpr} \mid \\ & y \in \qualifsem{\Qualifier} \}\\ \\
   \qualifsem{  \PathExpr  }    =& \{ x \mid  \exists y. (x,y) \in \pathsem{\PathExpr}  \} \\
   \qualifsem{\qualifcount{\PathExpr}~\Comparison~k } =& \{ x \in \dom \mid \\& \card{\left\{y \mid (x,y) \in \pathsem{\PathExpr} \right\}} \\ & \text{satisfies }  \Comparison~k \} \\
   \qualifsem{  \qualifnot Q  }   =& \dom \setminus \qualifsem{Q}\\
   \qualifsem{Q_1 \qualifand Q_2}    =& \qualifsem{Q_1} \cap \qualifsem{Q_1} \\
   \qualifsem{ Q_1 \qualifor Q_2 }   =&  \qualifsem{Q_2} \cup \qualifsem{Q_2}
   \end{align*}
   \caption{Semantics of Core XPath Expressions}\label{fig:xpath-semantics}
 \end{figure}

It was already observed in \cite{DBLP:conf/doceng/GenevesR05,Balder09} that using positional information in paths reduces to counting (at the cost of an exponential blow-up).
For example, the expression
\begin{verbatim}
child::a[position()=5]
\end{verbatim}
first selects the ``\texttt{a}'' nodes occurring as children of the current context node, and then keeps those occurring at the $5$th position. This expression can be rewritten into the semantically equivalent expression:
\begin{verbatim}
child::a[count(preceding-sibling::a)=4]
\end{verbatim}
which constraints the number of preceding siblings named ``\texttt{a}'' to $4$, so that the  qualifier becomes true only for the $5$th child ``\texttt{a}''. A general translation of positional information in terms of counting operators \cite{DBLP:conf/doceng/GenevesR05,Balder09} is summarized on Figure~\ref{fig:positional-info}, where $\precedence$ denotes the document order (depth-first left-to-right) relation in a tree.  Note that translated path expressions can in turn be expressed into the core XPath fragment of Figure~\ref{fig:xpath-syntax} (at the cost of another exponential blow-up). Indeed, expressions like $\PathExpr/(\PathExpr_2 \pathexcept \PathExpr_3)/\PathExpr_4$ must be rewritten into expressions where binary connectives for paths occur only at top level, as in:
\begin{align*}
& \PathExpr/\PathExpr_2/\PathExpr_4 \pathexcept \\ & \PathExpr/\PathExpr_3/\PathExpr_4
\end{align*}

\begin{figure}[h]
\begin{align*}
\PathExpr[\qualifposition=1] \equiv &  \PathExpr \pathexcept (\PathExpr/\precedence) \\
\PathExpr[\qualifposition=k+1] \equiv &  (\PathExpr \pathintersect \\ & (\PathExpr[k]/\!\precedence))[\qualifposition\!=\!1] \\
\precedence \equiv&  (\text{descendant::*})  \pathunion (\text{a-o-s::*}/ \\ &  \text{following-sibling::*}/\text{d-or-s::*}) \\
\text{a-or-s::*}   \equiv & \text{ancestor::*} \pathunion \text{self::*}\\
\text{d-or-s::*} \equiv & \text{descendant::*} \pathunion \text{self::*}
\end{align*}
   \caption{Positional Information as Syntactic Sugars  \cite{DBLP:conf/doceng/GenevesR05,Balder09}}\label{fig:positional-info}
\end{figure}

We focus on Core XPath expressions involving the counting operator  (see Figure~\ref{fig:xpath-syntax}). The XPath fragment without the counting operator (the navigational fragment) was already linearly translated into $\mu$-calculus in  \cite{geneves-pldi07}. The contributions presented in this paper allow to equip this navigational fragment with counting features such as the ones formulated above. Logical formulas capture the aforementioned XPath counting constraints.
For example, consider the following XPath expression:
\begin{verbatim}
child::a[count(descendant::b[parent::c])>5]
\end{verbatim}
This expression selects the children nodes named ``\texttt{a}'' provided they have more than $5$
descendants which (1) are named ``\texttt{b}'' and (2) whose parent is named ``\texttt{c}''. The logical formula denoting the set of children nodes
named ``\texttt{a}'' is:
\[\psi = a\wedge \modalf{\invns^*,\invfc}\top  \]
The  logical translation of the above XPath expression is:
\[\psi\wedge\modalf{\fc}{\countf{}{(\fc|\ns)^\star}{(b \wedge \ufixpf{\var}{\modalf{\invfc}{c}\vee \modalf{\invns}{\var}} )}{>}{5}}\]
This formula holds for nodes selected by the XPath expression. A correspondence between the main XPath axes over unranked trees and modal formulas over binary trees is given in Figure~\ref{fig:axes-modalities}. In this figure, each logical formula holds for nodes selected by the corresponding XPath axis from a context $\gamma$.
\begin{figure}[h]
  \begin{center}
$\begin{array}{r|l}
  \text{Path} & \text{Logical formula}\\
  \hline
 \gamma/\text{self::*}  &  \gamma  \\
 \gamma/\text{child::*} &  \modalf{\invns^*, \invfc}\gamma  \\
 \gamma/\text{parent::*} &  \modalf{\fc}{\modalf{\ns^*}{\gamma}} \\
 \gamma/\text{descendant::*} &  \modalf{(\invns\mid\invfc)^*, \invfc}\gamma\\
 \gamma/\text{ancestor::*}  &  \modalf{\fc}{\modalf{(\fc\mid\ns)^*}{\gamma}}\\
 \gamma/\text{following-sibling::*} &  \modalf{\invns}{\modalf{\invns^*}{\gamma}}\\
 \gamma/\text{preceding-sibling::*} & \modalf{\ns}{\modalf{\ns^*}{\gamma}} \\
\end{array}$
\end{center}
\caption{XPath axes as modalities over binary trees.}\label{fig:axes-modalities}
\end{figure}



Let consider another example (XPath expression $e_1$):
\begin{verbatim}
child::a/child::b[count(child::e/descendant::h])>3]
\end{verbatim}
Starting from a given context in a tree, this XPath expression navigates to children nodes named ``a'' and selects their children named ``b''. Finally, it retains only those ``b'' nodes for which the qualifier between brackets holds.
The first path can be translated in the logic as follows:
$$ \vartheta = b \wedge \ufixpf{\var}{\modalf{\invfc}{(a \wedge \ufixpf{\var^\prime}{\modalf{\invfc}{\top} \vee \modalf{\invns}{\var^\prime}}}) \vee \modalf{\invfc}{\var}}$$

This example requires a more sophisticated translation in the logic. This is because it makes implicit that ``e'' nodes (whose existence is simply tested for counting purposes) must be children of selected ``b'' nodes. The translation of the full aforementioned XPath expression is as follows:
$$\vartheta \wedge \nominal \wedge \countf{}{(\invfc\mid\invns)^*, (\fc\mid\ns)^*}{\eta}{>}{3}$$  
where $\nominal$ is a new fresh nominal used to mark a ``b'' node which is filtered by the qualifier and the formula $\eta$ describes the counted ``h'' nodes:
$$\eta = h \wedge \ufixpf{\var}{\modalf{\invfc}{(e \wedge \ufixpf{\var^\prime}{\modalf{\invfc}{\nominal} \vee \modalf{\invns}{\var^\prime}}}) \vee \modalf{\invns}{\var} \vee \modalf{\invfc}{\var}}$$
Intuitively, the general idea behind the translation is to first translate the leading path, use a fresh nominal for marking a node which is filtered, then find at least ``3'' instances of ``h'' nodes from which we can reach back the marked node via the inverse path of the counting formula. 

Since trails make it possible to navigate but not to test properties (like existence of labels), we test for labels in the counted formula $\eta$ and we use a general navigation $(\invfc\mid\invns)^*, (\fc\mid\ns)^*$ to look for counted nodes everywhere in the tree. Introducing the nominal is necessary to bind the context properly (without loss of information). Indeed, the XPath expression $e_1$ makes implicit that a ``e'' node must be a child of a ``b'' node selected by the outer path. Using a nominal, we restore this property by connecting the counted nodes to the initial single context node.

\begin{lemma}
The translation of Core XPath expressions with counting constraints into the logic is linear.
\label{xpathtranslationlemma}
\end{lemma}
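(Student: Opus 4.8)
The plan is to define the translation explicitly as a family of mutually recursive functions over the syntax of Figure~\ref{fig:xpath-syntax} and to show, by induction on this syntax, that each production contributes only a constant number of logical connectives on top of the translations of its immediate sub-expressions. Concretely, following the scheme of \cite{geneves-pldi07}, a relative path expression is translated \emph{backwards}: a step $\step{a}{\sigma}$ becomes the fixed-size modal pattern associated with axis $a$ in Figure~\ref{fig:axes-modalities} conjoined with the name test $\sigma$; a composition $\PathExpr_1/\PathExpr_2$ nests the translation of $\PathExpr_1$ inside that of $\PathExpr_2$ (so the context formula appears exactly once, at the deep end) and adds no new symbols; and $\PathExpr[\Qualifier]$ conjoins the separately defined translation of $\Qualifier$ at the node reached by $\PathExpr$, again without duplication. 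A qualifier is translated into a formula that holds exactly at the nodes it filters: a nested path $\PathExpr$ becomes an existential navigation formula of size $O(|\PathExpr|)$, the operators $\qualifnot$, $\qualifand$, $\qualifor$ map to $\neg,\wedge,\vee$, and the nominal $\nominal$ expands to the fixed-size global counting formula of Section~\ref{subsec:globalformula}.

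The only genuinely new case with respect to \cite{geneves-pldi07} is the counting expression $\qualifcount{\PathExpr'}~\Comparison~k$. Here I would proceed exactly as in the worked examples: mark the context node with a fresh nominal $\nominal$ (constant overhead), translate $\PathExpr'$ into a counted formula $\eta$ that tests, at a candidate target node, the name tests and qualifiers occurring along $\PathExpr'$ by navigating \emph{along the converses} of its axes back to $\nominal$, and finally emit $\countf{}{(\invfc\mid\invns)^\star,(\fc\mid\ns)^\star}{\eta}{\#}{k}$. The conversion of $\PathExpr'$ to $\eta$ is itself a structural recursion that replaces each step by the converse of its Figure~\ref{fig:axes-modalities} pattern and each internal qualifier by its already-linear translation; since $\PathExpr'$ in the Core grammar contains no binary path connectives, this recursion neither re-traverses nor copies any sub-expression, so $|\eta| = O(|\PathExpr'|)$. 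The comparison operators $\geq,<,=$ are eliminated in favour of $>$ and $\leq$ with a constant shift of $k$ (the numeral $k$ being part of the input, this costs $O(1)$, not $O(k)$). Finally, the top-level connectives $\pathunion,\pathintersect,\pathexcept$ and the absolute-path marker $/\PathExpr$ translate to $\vee$, $\wedge$, $\wedge\neg$ and a root test, each with constant overhead.

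With all cases exhibiting the shape $|\mathcal{T}(e)| \leq |\mathcal{T}(e_1)| + \cdots + |\mathcal{T}(e_m)| + O(1)$, a straightforward induction yields $|\mathcal{T}(e)| = O(|e|)$. The step I expect to require the most care is the counting case: one must check that the backward/converse translation used to build $\eta$ faithfully reconstructs $\pathsem{\PathExpr'}$ relative to the marked context (so that soundness of the translation is not sacrificed for succinctness), and that nested qualifiers inside $\PathExpr'$ — which may themselves contain paths and even further $\qualifcount{\cdot}$ sub-expressions, hence further fresh nominals — are handled without any multiplicative blow-up. The observation that makes this work, and that distinguishes it from the exponential translations of positional predicates recalled in Figure~\ref{fig:positional-info}, is precisely that the Core XPath grammar confines $\pathunion,\pathintersect,\pathexcept$ to the top level, so no path ever needs to be put in a normal form that distributes these connectives.
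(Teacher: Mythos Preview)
Your proposal is correct and follows essentially the same route as the paper: a structural induction that defers the non-counting cases to \cite{geneves-pldi07} and handles $\qualifcount{\PathExpr'}~\Comparison~k$ by introducing a fresh nominal at the context node and emitting a single counting formula over the constant-size universal trail $(\invfc\mid\invns)^\star,(\fc\mid\ns)^\star$, so that the axis information of $\PathExpr'$ is encoded once, inside the counted subformula, rather than being duplicated into the trail.

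One small correction that actually simplifies the step you flag as delicate: the grammar in Figure~\ref{fig:xpath-syntax} distinguishes $\Qualifier$ from $\Qualifier'$, and $\Qualifier'$ (the only qualifiers allowed inside a $\CountExpr$ argument) does \emph{not} contain $\CountExpr$. Hence there are no ``further $\qualifcount{\cdot}$ sub-expressions'' nested inside $\PathExpr'$, which is consistent with the logic's restriction (Section~\ref{cyclefreeness}) that counting formulas do not occur under counting formulas; the concern about multiplicative blow-up from nested counts therefore does not arise.
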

It is proven by structural induction in a similar manner to \cite{geneves-pldi07} (in which the translation is proven for expressions without counting constraints). For counting formulas, the use of nominals and the general (constant-size) counting trail make it possible to avoid duplication of trails so that the translation remains linear.

\begin{corollary}
The equivalence problem for expressions of the form:
\[\PathExpr[\qualifcount{\PathExpr'} \# \natn]   \]
\noindent
where $\# \in \{\leq, >, =\}$ and $\natn$ is a constant, is decidable. More specifically, the equivalence problem can be decided in exponential time in terms of the expression size and the highest nesting level of counting formulas.
\end{corollary}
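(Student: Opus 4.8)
The plan is to reduce the equivalence problem to the \emph{unsatisfiability} problem for the logic and then invoke the complexity results already proved. Two expressions $e_1$ and $e_2$ of the stated form are equivalent exactly when, for every tree $\tree$ and every pair of nodes $(\gamma,x)$, the pair $(\gamma,x)$ belongs to $\pathsem{e_1}$ iff it belongs to $\pathsem{e_2}$. The first step is therefore to express ``$x$ is selected by $e_i$ when evaluated from the context $\gamma$'' as a logical formula whose selected node is $x$. Following exactly the XPath-to-logic translation of Section~\ref{sec:xpath} (and the treatment of the running example $e_1$ there), I would mark the context $\gamma$ with a single fresh nominal $\nominal$ shared by both translations, and let $\form_{e_i}$ be the translation of $e_i$ relative to that nominal; by Lemma~\ref{xpathtranslationlemma} each $\form_{e_i}$ has size linear in that of $e_i$. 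A witness of inequivalence is then precisely a model of the symmetric-difference formula
\[ \form_{\neq} \;\equiv\; (\form_{e_1}\wedge\neg\form_{e_2})\;\vee\;(\form_{e_2}\wedge\neg\form_{e_1}), \]
so $e_1\equiv e_2$ iff $\form_{\neq}$ is unsatisfiable.

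Second, I would check that $\form_{\neq}$ is a legal formula of the logic. The only delicate operation is the negation: by the rules of Figure~\ref{normalneg} the negation of a counting subformula is again a counting subformula, and the negation of a modal or fixpoint subformula keeps the same modalities and variable occurrences, so pushing $\neg$ inside $\form_{e_i}$ produces a negation-normal-form formula that introduces no counting under a fixpoint and no counting under another counting operator, and that preserves the cycle-freeness already enjoyed by the translated trails. Since the logic is closed under negation on cycle-free formulas, $\form_{\neq}$ is a formula whose satisfiability is decided by Algorithm~\ref{satalgo}.

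Third, I would carry out the complexity accounting. When the leading path $\PathExpr$ and the inner path $\PathExpr'$ contain no further counting, the translation lands in the base logic of Figure~\ref{normalformsyn}, the Lean of $\form_{\neq}$ is linear in $|e_1|+|e_2|$, and Theorem~\ref{complexitythm} yields a $2^{O(|e_1|+|e_2|)}$ bound. When $\PathExpr$ nests counting qualifiers inside path expressions inside other counting qualifiers, the direct translation would place a counting operator within the scope of another one; there I would first rewrite children-counting subformulas using the transformation $\uc(\cdot)$ of Section~\ref{sec:gradedpaths}, whose cost on the Lean is controlled by Lemma~\ref{countinfixp}, bringing $\form_{\neq}$ into a shape accepted by the algorithm at the price of a factor $k^{l}$ in the Lean size ($k$ the largest constant occurring in a $\CountExpr$, $l$ the nesting depth of counting). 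Composing the linear translation of Lemma~\ref{xpathtranslationlemma} with Theorem~\ref{complexitythm}, one obtains decidability of the equivalence of $e_1$ and $e_2$ in time $2^{O((|e_1|+|e_2|)\cdot k^{l})}$, i.e., exponential in the size of the expressions and in the nesting level of counting.

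The main obstacle I anticipate is the faithfulness of the reduction around the context node, the very subtlety illustrated by $e_1$ in Section~\ref{sec:xpath}: a careless translation loses the fact that the counted nodes must be connected to a \emph{single} context node rather than to \emph{some} node matching the leading path, and the remedy (the fresh nominal together with the everywhere-navigating trail $(\invfc\mid\invns)^{\star},(\fc\mid\ns)^{\star}$ inside the counting formula) must be argued to make $\form_{\neq}$ satisfiable exactly when $e_1$ and $e_2$ disagree on some $(\gamma,x)$. A secondary point requiring care is folding the exponential blow-ups that already live in the XPath layer — normalising path difference so that binary path connectives occur only at top level, and eliminating positional predicates in favour of counting — into the stated $k^{l}$/expression-size bound rather than incurring an extra exponential.
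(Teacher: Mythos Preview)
Your reduction of equivalence to unsatisfiability of the symmetric-difference formula, together with the linear translation of Lemma~\ref{xpathtranslationlemma} and the complexity bound of Theorem~\ref{complexitythm}, is exactly the argument the paper intends; the corollary is stated without proof and is meant to follow immediately from those two results. Your additional care about negation-normal form and cycle-freeness is appropriate, though note that by the grammar of Figure~\ref{fig:xpath-syntax} the inner $\PathExpr'$ (and its $\Qualifier'$) does not admit $\CountExpr$, so counting is never nested \emph{inside} counting and the $\uc(\cdot)$ detour you sketch is not actually needed for expressions of the stated form.
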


\subsection{Regular Tree Languages with Cardinality Constraints}

Regular tree grammars capture most of the schemas in use today \cite{DBLP:journals/toit/MurataLMK05}. The logic can express all regular tree languages (it is easy to prove that regular expression types in the manner of e.g., \cite{DBLP:journals/toplas/HosoyaVP05}  can be linearly translated into the logic: see \cite{geneves-pldi07}). 

In practice, schema languages often provide shorthands for expressing cardinality constraints on node occurrences. XML Schema notably offers two attributes {\em minOccurs} and {\em maxOccurs} for this purpose. For instance, the following XML schema definition:
\begin{Verbatim}[fontsize=\small]
<xsd:element name="a">
  <xsd:complexType>
    <xsd:sequence>
     <xsd:element name="b" minOccurs="4" maxOccurs="9"/>
    </xsd:sequence>
  </xsd:complexType>
</xsd:element>
\end{Verbatim} 
\noindent
is a notation that restricts the number of occurrences of ``\texttt{b}'' nodes to be at least 4 and at most 9, as children of ``\texttt{a}'' nodes. The goal here is to have a succinct notation for expressing regular languages which could otherwise be exponentially large if written with usual regular expression operators. The above regular requirement can be translated as the formula:
\[\phi \wedge \modalf{\fc}{(\countf{}{\ns^\star}{b}{>}{3}} \wedge \countf{}{\ns^\star}{b}{\leq}{9})\]
where $\phi$ corresponds to the regular tree type $a[b^*]$ as follows:
\[\begin{array}{ll}
\phi =&(a \wedge (\neg\modalf{\fc}{\top} \vee \modalf{\fc}{\psi})) \wedge \neg\modalf{\ns}{\top}  \vspace{0.1cm}\\
\psi =&\mu x. \left(b \wedge \neg\modalf{\fc}{\top} \wedge \neg\modalf{\ns}{\top}\right)  \vee \left(b \wedge \neg\modalf{\fc}{\top} \wedge \modalf{\ns}{x}\right)
\end{array}\]


This example only involves counting over children nodes. The logic allows counting through more general trails, and in particular arbitrarily deep trails. 
Trails corresponding to the XPath axes ``preceding, ancestor, following'' can be used to constrain the context of a schema. 
The ``descendant'' trail can be used to specify additional constraints over the subtree defined by a given schema. For instance, suppose we want to forbid webpages containing nested anchors ``$a$'' (whose interpretation makes no sense for web browsers). We can build the logical formula $f$ which is the conjunction of a considered schema for webpages (e.g. XHTML) with the formula $a/\text{descendant::}a$ in XPath notation. Nested anchors are forbidden by the considered schema iff $f$ is unsatisfiable.

As another example, suppose we want paragraph nodes (``$p$'' nodes) not to be nested inside more than 3 unordered lists (``$ul$'' nodes), regardless of the schema defining the context. One may check for the unsatisfiability of the following formula:
\[ p\wedge \countf{}{(\invfc|\invns)^\star,\invfc}{ul}{>}{3} \]

\section{Related Work}  \label{sec:relatedwork}

\paragraph{Counting over graphs}
The $\mu$-calculus is a propositional modal logic augmented with least and greatest fixpoint operators \cite{DBLP:conf/icalp/Kozen82}. Kupferman, Sattler and Vardi study a $\mu$-calculus with graded modalities where one can 
express, e.g., that a node has at least $n$ successors satisfying a certain property \cite{DBLP:conf/cade/KupfermanSV02}. The 
modalities are limited in scope since they only count children of a given node. 

The $\mu$-calculus has been recently extended with inverse modalities \cite{685998}, nominals \cite{DBLP:conf/cade/SattlerV01}, and graded modalities \cite{DBLP:conf/cade/KupfermanSV02}. If only two of the above constructs are considered,  satisfiability of the enriched calculus is EXPTIME-complete \cite{Bonatti-et-al-ICALP-06}.
However, if all of the above constructs are considered simultaneously, the calculus becomes undecidable \cite{Bonatti-et-al-ICALP-06}. Hopefully, this undecidability result for the case of graphs does not preclude decidable tree logics combining such features.

\paragraph{Counting over trees}

The notion of Presburger Automata for trees, combining both regular 
constraints on the children of nodes and numerical constraints given by Presburger 
formulas, has independently been introduced by Dal Zilio and Lugiez \cite{964013} and Seidl et al. \cite{DBLP:conf/icalp/SeidlSMH04}.  Specifically, Dal Zilio and Lugiez \cite{964013} propose a modal logic for unordered trees called Sheaves logic. This logic allows to impose certain arithmetical constraints on children nodes but lacks recursion (i.e., fixpoint operators) and inverse navigation.  Dal Zilio and Lugiez consider the satisfiability and the membership problems. 
 Demri and Lugiez \cite{DBLP:conf/cade/DemriL06} showed by means of an automata-free decision procedure that this logic is only  PSPACE-complete. Restrictions like {\em $\prop_1$ nodes have no more ``children'' than $\prop_2$ nodes}, are expressible by this approach. Seidl et al. \cite{DBLP:conf/icalp/SeidlSMH04} introduce a fixpoint Presburger logic, which, in addition to numerical constraints on  children nodes, also supports recursive forward navigation. For example, expressions like {\em the descendants of $\prop_1$ nodes have no more ``children'' than the number of children of descendants of $\prop_2$ nodes} are allowed.
This means that constraints can be imposed on sibling nodes (even if they are deep in the tree) by forward recursive navigation but not on distant nodes which are not siblings. 

Compared to the work presented here, neither of the two previous approaches can support constraints like {\em there are more than $5$ ancestors of ``$\prop$'' nodes}. 

Furthermore, due to the lack of backward navigation, the works found in \cite{964013,DBLP:conf/icalp/SeidlSMH04,DBLP:conf/cade/DemriL06} are not suited for succinctly capturing  XPath expressions. Indeed, it is well-known that expressions with backward modalities are exponentially more succinct than their forward-only counterparts \cite{symmetry,DBLP:conf/doceng/GenevesR05}.


There is poor hope to push the decidability envelope much further for counting constraints.
Indeed, it is known from \cite{DBLP:conf/icalp/KlaedtkeR03,DBLP:conf/cade/DemriL06,Balder09} that the equivalence problem is undecidable  for XPath expressions with counting operators of the form:
\begin{itemize}
\item $\PathExpr_1[\qualifcount{\PathExpr_2} = \qualifcount{\PathExpr_3}]$,  or
\item $\PathExpr_1[\qualifposition = \qualifcount{\PathExpr_2}]$.
\end{itemize}
This is the reason why logical frameworks that allow comparisons between counting operators limit counting by restricting the $\PathExpr$  to immediate children nodes \cite{964013,DBLP:conf/icalp/SeidlSMH04}.
In this paper, we chose a different tradeoff: comparisons are restricted to constants but at the same time comparisons along more general paths are permitted. 

\section{Conclusion} \label{sec:conclusion}
We introduced a modal logic of trees equipped with (1) converse modalities, which allow to succinctly express forward and backward navigation, (2) a least fixpoint operator for recursion, and (3) cardinality constraint operators for expressing numerical occurrence constraints on tree nodes satisfying some regular properties. A sound and complete algorithm is presented for testing satisfiability of logical formulas.
This result is surprising since the corresponding logic for graphs is undecidable \cite{Bonatti-et-al-ICALP-06}. 

The decision procedure for the logic is exponential time w.r.t. to the formula size.
The logic captures regular tree languages with cardinality restrictions, as well as the navigational fragment of XPath equipped with counting features.  Similarly to backward modalities, numerical constraints do not extend the logical expressivity beyond regular tree languages. Nevertheless they enhance the succinctness of the formalism as they provide useful shorthands for otherwise exponentially large formulas.

This makes it possible to extend static analysis to a larger set of XPath and XML schema features in a more efficient way. We believe the field of application of this logic may go beyond the XML setting. For example, in verification of linked data structures \cite{kuncak08,DBLP:conf/tacas/HabermehlIV06} reasoning on tree structures with in-depth cardinality constraints seems a major issue. Our result may help building solvers that are attractive alternatives to those based on non-elementary logics such as SkS \cite{Thatcher68}, like, e.g., Mona \cite{mona_user_manual}.

%

\bibliographystyle{alpha}
\bibliography{ctl}

\end{document}